\numberwithin{equation}{section}
\newtheorem{Theorem}{Theorem}[section]
\newtheorem{Corollary}[Theorem]{Corollary}
\newtheorem{Lemma}[Theorem]{Lemma}
{ \theoremstyle{definition}
\newtheorem{Definition}[Theorem]{Definition}
\newtheorem{Remark}[Theorem]{Remark} }
\begin{document}

\allowdisplaybreaks

\newcommand{\arXivNumber}{1511.09418}

\renewcommand{\PaperNumber}{034}

\FirstPageHeading

\ShortArticleName{Formal Integrals and Noether Operators}

\ArticleName{Formal Integrals and Noether Operators\\
of Nonlinear Hyperbolic Partial Dif\/ferential Systems\\
Admitting a Rich Set of Symmetries}

\Author{Sergey Ya. STARTSEV}

\AuthorNameForHeading{S.Ya.~Startsev}

\Address{Institute of Mathematics, Ufa Scientific Center, Russian Academy of Sciences, \\ 112 Chernyshevsky Str., Ufa, Russia} 
\Email{\href{mailto:startsev@anrb.ru}{startsev@anrb.ru}}
\URLaddress{\url{http://www.researcherid.com/rid/D-1158-2009}}

\ArticleDates{Received September 16, 2016, in f\/inal form May 18, 2017; Published online May 27, 2017}

\Abstract{The paper is devoted to hyperbolic (generally speaking, non-Lagrangian and nonlinear) partial dif\/ferential systems possessing a full set of dif\/ferential operators that map any function of one independent variable into a symmetry of the corresponding system. We demonstrate that a system has the above property if and only if this system admits a full set of formal integrals (i.e., dif\/ferential operators which map symmetries into integrals of the system). As a consequence, such systems possess both direct and inverse Noether operators (in the terminology of a work by B.~Fuchssteiner and A.S.~Fokas who have used these terms for operators that map cosymmetries into symmetries and perform transformations in the opposite direction). Systems admitting Noether operators are not exhausted by Euler--Lagrange systems and the systems with formal integrals. In particular, a hyperbolic system admits an inverse Noether operator if a dif\/ferential substitution maps this system into a~system possessing an inverse Noether operator.}

\Keywords{Liouville equation; Toda chain; integral; Darboux integrability; higher symmetry; hyperbolic system of partial dif\/ferential equations; conservation laws; Noether theorem}

\Classification{37K05; 37K10; 37K35; 35L65; 35L70}

\section{Introduction}
Let us consider partial dif\/ferential systems of the form
\begin{gather}\label{hyp}
u_{xy}=F(x,y,u,u_x,u_y),
\end{gather}
where $u=(u^1;u^2;\dots;u^n)$ and $F=(F^1;F^2;\dots;F^n)$ are an $n$-dimensional vector and a vector-valued function, $u$ depends on real variables $x$ and $y$. A special class of integrable systems~\eqref{hyp} consists of systems for which there exist $n$ functionally independent $x$- and $y$-integrals (i.e., functions of the forms $w(x,y,u,u_1, \dots, u_k)$, $u_i:=\partial^i u /\partial x^i$ and $\bar{w}(x,y,u,\bar{u}_1, \dots, \bar{u}_m)$, $\bar{u}_i:=\partial^i u /\partial y^i$ such that $D_y(w)=0$ and $D_x(\bar{w})=0$; here $D_y$ and $D_x$ denote the total derivatives with respect to $y$ and $x$ by virtue of the system). Scalar ($n=1$) equations of such kind have been studied in classical works like~\cite{Gurs} since the 19th century as well as in relatively recent papers such as \cite{AK,AYu,Sak,SZh,StM,Tsar,avz,ZhSh,ZhSS,ZhibSok}.

The most known example of scalar equations from the above class is the Liouville equation $u_{xy}=e^u$, for which the minimal-order integrals are
\begin{gather}\label{leint}
\omega=u_{xx}-\frac{u_x^2}{2}, \qquad \bar{\omega}=u_{yy}-\frac{u_y^2}{2}.
\end{gather}
In addition to the integrals, the Liouville equation, according to \cite{ZhSh}, possesses the dif\/ferential operators $\sigma=D_x + u_x$ and $\bar{\sigma}=D_y + u_y$ that respectively map $\ker D_y$ and $\ker D_x$ into solutions $f$ of the linearized Liouville equation $D_x D_y(f) = e^u f$. In other words, $\sigma (g)$ and $\bar{\sigma}(\bar{g})$ are sym\-met\-ries of the Liouville equation for any $g \in \ker D_y$, $\bar{g} \in \ker D_x$. (Generalized symmetries are often written in the form of equations $u_t=f$, but we omit the left-hand sides of these equations for brevity and call $f$ symmetries; i.e., we understand the term `symmetry' as a synonym for `charac\-te\-ris\-tic of a symmetry vector f\/ield' in the terminology of \cite{Olv}.)
Note that all functions of~$x$ and~$y$ respectively lie in $\ker D_y$ and $\ker D_x$ for any system~\eqref{hyp}. In addition, if system \eqref{hyp} admits an $x$-integral $w$ and a $y$-integral $\bar{w}$, then all functions of the forms $g(x,w,D_x(w),D_x^2(w),\dots)$ and $\bar{g}(y,\bar{w}, D_y(\bar{w}),D_y^2(\bar{w}),\dots)$ belong to $\ker D_y$ and $\ker D_x$, respectively. Thus, the symmetry families $\sigma (g)$ and $\bar{\sigma}(\bar{g})$ depend on arbitrary functions.

The Pohlmeyer--Lund--Regge system in the degenerate case
\begin{gather*} u_{xy}^1=\frac{u^2 u^1_x u^1_y}{u^1 u^2 + c},\qquad u_{xy}^2=\frac{u^1 u^2_x u^2_y}{u^1 u^2 + c},\end{gather*}
where $c$ is a constant, can serve as an illustration of the `Liouville-like' properties for $n>1$. The system admits the $x$-integrals
\begin{gather}\label{plri}
w_1=u^1_x u^2_x/\big(u^1 u^2 + c\big), \qquad w_2= u^2_{xx}/u^2_x - \big(u^2 u^1_x\big)/\big(u^1 u^2 + c\big),
\end{gather}
while the `symmetry-driving $x$-operators'
\begin{gather}\label{plrs}
\sigma_1=\left(\begin{matrix} u^1_x \vspace{1mm}\\ u^2_x \end{matrix} \right), \qquad
\sigma_2=\left(\begin{matrix} \dfrac{u^1_x}{w_1} \vspace{1mm}\\ 0 \end{matrix} \right) D_x +
\left(\begin{matrix} -u^1 \vspace{1mm}\\ u^2 \end{matrix} \right)
\end{gather}
map $\ker D_y$ into symmetries of this system. It is convenient to introduce a special term for such operators, and the author of\/fers to use the term `symmetry driver' for them. Replacing $x$ with $y$ in \eqref{plri}, \eqref{plrs}, we obtain $y$-integrals and $y$-symmetry drivers of the system. Additional examples of systems \eqref{hyp} with the same properties can be found, in particular, in \cite{Demsk}.

Moreover, all known examples of system \eqref{hyp} support the conjecture that system \eqref{hyp} admits $n$ both $x$- and $y$-integrals if and only if this system possesses $n$ both $x$- and $y$-symmetry drivers (which, to avoid any assumptions on the existence of integrals in their def\/initions, can be def\/ined as dif\/ferential operators that map arbitrary functions of $x$ and $y$, respectively, into symmetries). In the case $n=1$, the simultaneous existence of integrals and symmetry drivers is well established: results of \cite{AYu,AK,SZh,StM,ZhSS} together prove\footnote{For more details, see Section~1 of~\cite{Starxiv}.} that a scalar equation of the form~\eqref{hyp} has both an $x$-integral and a $y$-integral if and only if there exist dif\/ferential opera\-tors~$\sigma$ and~$\bar{\sigma}$ that map any functions of~$x$ and~$y$, respectively, into symmetries of the equation. The proof is based on the Laplace method of cascade integration (see, e.g.,~\cite{Tr}) that encounters serious dif\/f\/iculties in the case of the systems~($n>1$).

But, by using an alternative way, it was proved in \cite{StS} that system \eqref{hyp} possesses the full set of symmetry drivers if this system admits the full set of $x$- and $y$-integrals. The present paper demonstrates that the converse statement was, in fact, almost proved in \cite{StS}. More precisely, in Section~\ref{efl} we prove that any system \eqref{hyp} with $n$ dif\/ferent both $x$- and $y$-symmetry drivers possesses, in both $x$- and $y$-direction, $n$ dif\/ferent formal integrals (i.e., dif\/ferential operators which map any symmetry of this system into $\ker D_y$ and $\ker D_x$, respectively). We also illustrate by an example that formal integrals allow us to construct the full set of `genuine' integrals from symmetries of \eqref{hyp}. But to guarantee the existence of integrals in the general case, we need, in particular, to prove that formal integrals cannot map all symmetries into functions of $x$ or $y$ only instead of integrals. Such a proof is still absent.

Integrals are a special form of conservation laws, and Noether's theorem is widely used to derive conservation laws from symmetries and for the inverse operation. But the proofs of the statements mentioned in the previous two paragraphs do not use Noether's theorem. Moreover, not all systems admitting symmetry drivers are Euler--Lagrange systems, and this makes the Noether theorem inapplicable (at least in the classical form). On the other hand, the simultaneous existence of integrals and symmetry drivers suggests that some Noether-like relationships may be an underlying hidden reason of this simultaneous existence.

Trying to f\/ind such relationships, one can observe that symmetry drivers allow to construct dif\/ferential operators which map the characteristics of conservation laws into symmetries, and integrals (including formal ones) generate dif\/ferential operators which, roughly speaking, perform transformations in the inverse direction. Extending the terms def\/ined in \cite{ff} for evolution systems to a wider context, it is natural to call such operators Noether and inverse Noether operators, respectively. In Section~\ref{dis} we consider the Noether operators generated by integrals and symmetry drivers, but point out that these Noether operators do not explain the simultaneous existence of integrals and symmetry drivers. Despite the last fact, Noether operators seem to be interesting in themselves and, possibly, may be useful for other purposes. Motivations for this are given in Section~\ref{nomot} and, partially, in Sections~\ref{langs} and~\ref{dis} too.

In Section~\ref{noex} we consider some classes of systems~\eqref{hyp} admitting Noether operators and demonstrate that the existence of Noether operators is not a rare property for systems~\eqref{hyp}. In particular, it is proved in Section~\ref{notr} that system~\eqref{hyp} admits an inverse Noether operator if a~dif\/ferential substitution maps this system into a~system possessing an inverse Noether operator (for example, into an Euler--Lagrange system). Applying this result to the well-known Goursat equation $u_{xy}=\sqrt{u_x u_y}$, we use this equation as an example that illustrates the existence of inverse Noether operators for some systems~\eqref{hyp} possessing neither Lagrangians nor integrals nor symmetry drivers.

\section{Notation and basic def\/initions}
\subsection{Variables, functions and operators} Almost all\footnote{Rare exceptions are either explicitly marked or can be easily recognized from the context.} relations in the present paper are considered on solutions of \eqref{hyp}. To formalize this, we eliminate all mixed derivatives of $u$ from these relations by virtue of system \eqref{hyp} and its dif\/ferential consequences, and then demand that the equalities obtained in this way hold identically in terms of $x$, $y$, $u$ and the rest part of its derivatives. Therefore, we can assume without loss of generality that all local objects (symmetries, $x$- and $y$-integrals, coef\/f\/icients of dif\/ferential operators etc.) are functions of a f\/inite number of variables $x$, $y$, $u$, $u_i:=\partial^i u /\partial x^i$, $\bar{u}_j:=\partial^j u /\partial y^j$. We use the notation $g[u]$ to emphasize that a function $g$ depends on a f\/inite number of the above variables. If a function $g$ may depend on mixed derivatives of $u$ in addition to the above variables (i.e., the exclusion of mixed derivatives is not performed), then we use the notation $g\{u\}$. Our considerations are local, the right-hand side of \eqref{hyp} and all other functions are assumed to be locally analytical.

We employ the term `vector' for column vectors only, while the word `row' is used for row vectors. In particular, $u$ and its derivatives are column vectors. Let the superscript $\top$ denote the transposition operation. If $g$ is a scalar function and $z$ is a vector $(z^{1}, z^{2}, \dots, {z^{\kappa})}^{\top}$, then by $g_{z}={{\partial g}/{\partial z}}$ we denote the row $\big( {{\partial g}/{\partial z^{1}}}, {{\partial g}/{\partial z^{2}}},\dots, {{\partial g}/ {\partial z^{\kappa}}} \big)$. For any vector-valued function $G=\big(G^{1}, G^{2}, \dots, G^{\ell}\big)^{\top}$, $G_z= {{\partial G}/{\partial z}}$ designates the $\ell \times \kappa$ matrix with the rows $G^1_z, \dots, G^\ell_z$. For a function $g[u]$ we use the notation $\operatorname{ord}_x(g) = k$ and $\operatorname{ord}_y(g) = m$ if $k$ and $m$ are the highest integers for which the function $g[u]$ essentially depends on $u_k$ and $\bar{u}_m$, respectively. If a function~$g[u]$ does not depend on $u_k$ or $\bar{u}_m$ for all positive integer $k$ and $m$, then we set $\operatorname{ord}_x(g) =0$ or $\operatorname{ord}_y(g) = 0$, respectively.

Let $D_x$ and $D_y$ denote the operators of total derivatives by virtue of system \eqref{hyp}. For a~function $g[u]$ they are def\/ined by the formulas
\begin{gather*} D_{x}(g) = \frac{\partial g}{\partial x} + \frac{\partial g}{\partial u} u_1 + \sum^{\infty}_{i=1} \left( \frac{\partial g}{\partial u_i} u_{i+1} +\frac{\partial g}{\partial \bar{u}_i} D^{i-1}_y (F) \right),\\
D_{y}(g) = \frac{\partial g}{\partial y} + \frac{\partial g}{\partial u} \bar{u}_1 + \sum^{\infty}_{i=1} \left( \frac{\partial g}{\partial \bar{u}_i} \bar{u}_{i+1}
+\frac{\partial g}{\partial u_i} D^{i-1}_x (F) \right).\end{gather*}
The action of $D_x$ and $D_y$ on vectors and matrices is def\/ined componentwise. To make the notation more compact, in the above formulas and below we set the zero power of any dif\/ferentiation equal to the operator of multiplication by unit (that is, equal to the identity mapping).

Consider a dif\/ferential operator $\mathfrak{S}$
of the form
\begin{gather}\label{do}
\sum _{i=0}^{k} \sum _{j=0}^{m} \xi_{ij}[u] D_x^i D_y^j,
\end{gather}
where $\xi_{ij}[u]$ are matrices. By $\mathfrak{S}^\dagger$ we denote the \emph{formal adjoint} of $\mathfrak{S}$, i.e.,
\begin{gather*} \mathfrak{S}^\dagger =\sum
_{i=0}^{k} \sum _{j=0}^{m} (-1)^{i+j} D_x^i D_y^j \circ
\xi_{ij}^{\top}, \end{gather*}
where the symbol~$\circ$ denotes the composition of operators. Further on, we use the following property of the formal adjoint: $(P \circ Q)^{\dagger} = Q^{\dagger} \circ P^{\dagger}$ for any two dif\/ferential operators~$P$ and~$Q$ of the form~\eqref{do}. This and other operator equalities should be understood as follows: two objects are considered equal if their coef\/f\/icients $\xi_{ij}$ coincide after the objects are transformed to the canonical form \eqref{do}.

\subsection{Symmetries}
\begin{Definition} An $n$-component vector function $f[u]$ is said to be a \emph{symmetry} of system \eqref{hyp} if the function $f$ satisf\/ies the relation $L(f) = 0$, where
\begin{gather}\label{lin}
L = D_x D_y - F_{u_x} D_x - F_{u_y} D_y - F_u.
\end{gather}
\end{Definition}

The dif\/ferential operator \eqref{lin} is called the \emph{linearization operator} of system \eqref{hyp}.

\begin{Definition}\label{lis}
A dif\/ferential operator
\begin{gather}\label{si}
\sigma=\sum_{i=0}^k \varsigma_i[u] D_x^i, \qquad \varsigma_k \ne 0, \qquad k \ge 0,
\end{gather}
where $\varsigma_i$ are $n$-dimensional vectors, is said to be an \emph{$x$-symmetry driver} of system \eqref{hyp} if $\sigma(g(x))$ is a symmetry of this system for any scalar function $g(x)$. In this case, the symmetry family $\sigma(g(x))$ is called a \emph{Liouville-like $x$-symmetry} of~\eqref{hyp}, and the vector $\varsigma_k$ is called the \emph{separant} of the symmetry driver $\sigma$ and the corresponding Liouville-like symmetry $\sigma(g(x))$. We say that symmetry drivers $\sigma_1, \sigma_2,\dots, \sigma_r$ and the corresponding Liouville-like symmetries are \emph{essentially independent} if the $n \times r$ matrix consisting of their separants has rank~$r$.
\end{Definition}

If we replace $x$ with $y$ in the above def\/inition, then we obtain the def\/inition of $y$-symmetry drivers and Liouville-like $y$-symmetries. Using the symmetry of formula \eqref{hyp} with respect to the interchange $x \leftrightarrow y$, we hereafter give only one of two `symmetric' def\/initions and statements.
\begin{Lemma}\label{ops}
An operator $\sigma$ is an $x$-symmetry driver if and only if $\sigma$ has the form \eqref{si} and satisfies the operator equality $L \circ \sigma = \varrho \circ D_y$, where $\varrho= (D_x-F_{u_y}) \circ \sigma$.
\end{Lemma}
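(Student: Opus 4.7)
The plan is to reduce the `iff' to the single operator identity $\Delta = 0$, where $\Delta := L\circ\sigma - \varrho\circ D_y$ with $\varrho := (D_x - F_{u_y})\circ\sigma$, and then to exploit that $\ker D_y$ contains every scalar function $g(x)$.

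First I would compute $\Delta$ explicitly. Expanding $L$ via \eqref{lin} and regrouping gives
\[
\Delta = (D_x - F_{u_y})\bigl(D_y\circ\sigma - \sigma\circ D_y\bigr) - F_{u_x}\,D_x\circ\sigma - F_u\,\sigma.
\]
Because $\sigma = \sum_i \varsigma_i D_x^i$ contains no $D_y$, and because $D_x$ and $D_y$ commute on the jet space restricted by \eqref{hyp}, the commutator $D_y\circ\sigma - \sigma\circ D_y = \sum_i D_y(\varsigma_i)\,D_x^i$ is a pure $D_x$-operator. Consequently every summand in the displayed expression for $\Delta$ is an operator involving only powers of $D_x$; its canonical form is $\Delta = \sum_i \delta_i[u]\,D_x^i$.

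The ``$\Leftarrow$'' direction is then immediate: if $\Delta = 0$, then for any $g(x)$ one has $D_y(g)=0$, so $L(\sigma(g)) = \varrho(D_y(g)) = 0$, and $\sigma(g)$ is a symmetry. For ``$\Rightarrow$'', suppose $\sigma$ is an $x$-symmetry driver; then $L(\sigma(g))=0$ for every function $g$ of $x$ alone, while $(\varrho\circ D_y)(g)=0$ trivially, so $\Delta(g)=0$. Successively substituting the monomials $g(x) = x^m/m!$ for $m=0,1,2,\dots$ into the pure $D_x$-operator $\Delta$ forces $\delta_0 = 0$, then $\delta_1 = 0$, and so on; hence $\Delta = 0$, which is the asserted operator equality.

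The only step demanding genuine care is verifying that $\Delta$ contains no $D_y$-terms, which hinges on the commutativity $[D_x, D_y]=0$ modulo \eqref{hyp} together with the pure-$D_x$ form of $\sigma$; beyond that, the argument is a direct manipulation using $L$ and the elementary observation that a pure $D_x$-operator annihilating every $g(x)$ must vanish. I anticipate no serious obstacle: the lemma is essentially an operator-level restatement of the defining property of a symmetry driver.
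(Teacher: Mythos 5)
Your proposal is correct and follows essentially the same route as the paper: the easy direction uses $g(x)\in\ker D_y$, and the hard direction identifies $L\circ\sigma-\varrho\circ D_y$ as a pure $D_x$-operator (the paper writes out its coefficients explicitly, you package them as $\delta_i$) and kills its coefficients using the arbitrariness of $g(x)$. Your monomial-substitution step just makes explicit what the paper leaves implicit in "the arbitrariness of $g$."
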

\begin{proof}
Because all functions of $x$ belong to $\ker D_y$, the equality $L \circ \sigma = \varrho \circ D_y$ directly implies $\sigma(g) \in \ker L$ for any $g(x)$. Therefore, we need to prove the converse only.

For any $\sigma$ of the form \eqref{si}, it is easy to calculate that
\begin{gather*}L \circ \sigma = (D_y-F_{u_x})(\varsigma_k) D_x^{k+1} + \sum _{i=1}^{k}
\left( L(\varsigma_i) + (D_y-F_{u_x}) (\varsigma_{i-1})
\right) D_x^i + L(\varsigma_0)+\varrho \circ D_y,\end{gather*}
where $\varrho= (D_x-F_{u_y}) \circ \sigma$. Applying both sides of this equality to $g(x)$ and taking $L(\sigma(g))=0$ and the arbitrariness of $g$ into account, we obtain the chain of the relations
\begin{gather}
(D_y-F_{u_x})(\varsigma_k) =0, \nonumber\\
L(\varsigma_i) + (D_{y}-F_{u_x}) (\varsigma_{i-1}) = 0,\quad i=1,\dots,k, \label{lattr}\\
L(\varsigma_0) =0.\nonumber
\end{gather}
Since the left-hand sides of~\eqref{lattr} are the coef\/f\/icients at $D_x^i$ in $L \circ \sigma$, their vanishing implies $L \circ \sigma = \varrho \circ D_y$.
\end{proof}

Formulas \eqref{leint}, \eqref{plri} in the introduction illustrate that for some systems~\eqref{hyp} the kernel of~$D_y$ may contain not only functions of~$x$. And the equality $L \circ \sigma = \varrho \circ D_y$ implies that $\sigma(g)$ is a~symmetry of \eqref{hyp} for any $g[u] \in \ker D_y$. It is natural to call such symmetries Liouville-like too, but this is not included in Def\/inition~\ref{lis} to emphasize the absence of any assumption about the structure of $\ker D_y$ in this def\/inition.

If \eqref{hyp} admits $x$-symmetry drivers $\sigma_1, \sigma_2, \dots, \sigma_r$, then $\sigma_1(g^1)+\sigma_2(g^2) + \cdots + \sigma_r(g^r)$ is a~symmetry for any vector $g=(g^1(x),g^2(x), \dots,g^r(x))^\top$. In other words, this set of $x$-symmetry drivers can be considered as one operator $S=\sum_{i=0}^k \alpha_i [u] D_x^i$ such that $\alpha_i[u]$ are $n \times r$ matrices and $S$ maps any $r$-dimensional vector depending on $x$ (and other elements of $\ker D_y$ if such elements exist) into a symmetry of system \eqref{hyp}. If $\sigma$ is an $x$-symmetry driver, then $\sigma \circ D_x^i$ for any $i>0$ is an $x$-symmetry driver too. (It is easy to see, for example, form the equality $L \circ \sigma = \varrho \circ D_y$.) Therefore, we always can equalize the highest powers of $D_x$ in $\sigma_1, \sigma_2, \dots, \sigma_r$ and construct the corresponding operator $S$ so that its separant $\alpha_k$ has rank~$r$ if $\sigma_1, \sigma_2, \dots, \sigma_r$ are essentially independent. For example, the $x$-symmetry drivers $\sigma_1$, $\sigma_2$ def\/ined by~\eqref{plrs} can be represented as one operator
\begin{gather*} S= \left(\begin{matrix} u^1_x & \dfrac{u^1_x}{w_1} \\ u^2_x & 0 \end{matrix} \right) D_x +
\left(\begin{matrix} 0 & -u^1 \\ 0 & u^2 \end{matrix} \right),\end{gather*}
where $w_1$ is def\/ined by \eqref{plri}.

\subsection{Linearizations and integrals}
For any function $g[u]$, we def\/ine the dif\/ferential operator
\begin{gather*} g_{*}=\frac{\partial g}{\partial u} + \sum^{\infty}_{i=1}
\left( \frac{\partial g}{\partial \bar{u}_i} D^i_y +
\frac{\partial g}{\partial u_i} D^{i}_{x} \right) \end{gather*}
and call it the \emph{linearization} of the function $g$. The linearization of a vector function $g[u]$ is def\/ined by the same formula. For any $n$-dimensional vector-valued function $f[u]$ we can consider the dif\/ferentiation $\partial_f$ with respect to $t$ by virtue of the equation $u_t=f[u]$. It is easy to see that $\partial_f(g) = g_*(f)$. The direct calculation (see, for example, \cite{fi}) shows that
\begin{gather}\label{dlin}
D_y \circ g_* - \left( D_y(g) \right)_{*} = \sum_{i=0}^p \gamma_i[u] D_x^i \circ L, \qquad D_x \circ g_* - \left( D_x(g) \right)_{*} = \sum_{i=0}^{q} \bar{\gamma}_i[u] D_y^i \circ L,
\end{gather}
where $p=\max(0,\operatorname{ord}_x(g)-1)$, $q=\max(0,\operatorname{ord}_y(g)-1)$, $L$ is def\/ined by \eqref{lin}, and $\gamma_i[u]$, $\bar{\gamma}_i[u]$ are $n$-dimensional rows (or $\ell \times n$ matrices if $g$ is an $\ell$-dimensional vector). Equations \eqref{dlin} are a~more formal version of the well-known fact that $D_x$, $D_y$ commute with $\partial_f$ if $f[u]$ is a symmetry of \eqref{hyp}. ($[\partial _f,D_y] = [\partial _f,D_x] =0$ directly follows from \eqref{dlin} and $\partial_f(g) = g_*(f)$.)

\begin{Definition} A function $w[u]$ is called an \emph{$x$-integral} of system \eqref{hyp} if $D_y(w) = 0$. If $w$ depends on $x$ only, then $w$ is called a trivial $x$-integral.
\end{Definition}
For brevity, below we use the term `integral' as a synonym for `non-trivial integral'. It is easy to see that $x$-integrals cannot depend on the variables $\bar{u}_i$. The order of the highest partial derivative $u_i$ on which an $x$-integral essentially depends is called the order of this integral. Let $w^1,w^2, \dots, w^r$ be $x$-integrals of orders $p_1, p_2, \dots, p_r$, respectively. We say that these integrals are \emph{essentially independent} if the $r \times n$ matrix composed of the rows ${\partial w^1}/{\partial u_{p_1}}, {\partial w^2}/{\partial u_{p_2}}$, $\dots,{\partial w^r}/{\partial u_{p_r}}$ has rank~$r$. It is clear that system \eqref{hyp} cannot have more than~$n$ essentially independent $x$-integrals. An $n$-dimensional system of the form~\eqref{hyp} is called \emph{Darboux integrable} if this system admits both~$n$ essentially independent $x$-integrals and~$n$ essentially independent $y$-integrals.

\begin{Remark}\label{nondi}
There exist systems that admit integrals but are not Darboux integrable (i.e., the number of their essentially independent integrals is less than $2n$). Such systems also have interesting properties. For example, only one integral is suf\/f\/icient to construct an inverse Noether operator (see Remark~\ref{pno} below). If~\eqref{hyp} is an Euler--Lagrange system, then we also need only one integral to guarantee the existence of a symmetry driver for this system (see Section~\ref{langs}).

In addition, $n$ essentially independent $x$-integrals of \eqref{hyp} def\/ine a dif\/ferential substitution for any system $u_t=f(x,u,u_x,\dots,u_k)$ such that $f$ is a symmetry of \eqref{hyp}. This fact was noted for scalar equations in \cite{Sokumn,ZhSS} and used for systems~\eqref{hyp} in \cite{Demsk,Kis}. No $y$-integrals are necessary for this purpose if we are sure that a symmetry of the form $f(x,u,u_x,\dots,u_k)$ exists (otherwise, to guarantee the existence of such symmetries, we need additional assumptions such as the existence of a Lagrangian for~\eqref{hyp} or employing $y$-integrals in the way described at the beginning of Section~\ref{efl}).
\end{Remark}

If $w[u]$ and $\bar{w}[u]$ respectively are $x$- and $y$-integrals, then we obtain
\begin{gather}\label{fint}
D_y \circ w_* = \sum_{i=0}^p \gamma_i[u] D_x^i \circ L, \qquad D_x \circ \bar{w}_* = \sum_{i=0}^q \bar{\gamma}_i[u] D_y^i \circ L,
\end{gather}
by applying \eqref{dlin} to the def\/ining relations $D_y(w)=0$, $D_x(\bar{w})=0$. The following def\/inition can be considered as a generalization of \eqref{fint}.

\begin{Definition}\label{formi} A dif\/ferential operator $\wp = \sum\limits_{i=0}^{p+1} \nu_i [u] D_x^i$, where $\nu_i$ are $n$-dimensional rows, $p \ge 0$, $\nu_{p+1} \ne 0$, is called a \emph{formal $x$-integral} of order $p+1$ for~\eqref{hyp} if the operator identity
\begin{gather}\label{fintc}
D_y \circ \wp = \mathfrak{S} \circ L
\end{gather}
holds for an operator $\mathfrak{S}$ of the form \eqref{do}. We call $\nu_{p+1}$ \emph{separant} of $\wp$ and say that
formal integrals $\wp_1$, $\wp_2$, $\dots$, $\wp_r$ are \emph{essentially independent} if the $r \times n$ matrix composed of their separants has rank $r$.
\end{Definition}
Equation \eqref{fintc} means that $\wp$ maps symmetries (if they exist) into $\ker D_y$. This can be considered as an alternative def\/inition of formal integrals but requires the additional assumption on the symmetry existence. Therefore, we use \eqref{fintc} as a def\/ining relation instead.

Since the left-hand side of \eqref{fintc} does not contain terms with $D_y^j$, $j>1$, the operator $\mathfrak{S}$ cannot contain non-zero powers of $D_y$. We can rewrite \eqref{fintc} as
\begin{gather*} \wp \circ D_y + \dots = \mathfrak{S} \circ (D_x - F_{u_y}) \circ D_y + \cdots, \end{gather*}
where the dots denote the terms without $D_y$. Therefore,
\begin{gather}\label{fact}
\wp = \mathfrak{S} \circ (D_x - F_{u_y}).
\end{gather}
Thus, $\mathfrak{S}$ in \eqref{fintc} contains terms with powers of $D_x$ only and the highest of these powers is $p$.

Equation \eqref{fintc} implies that $D_x^j \circ \wp$, $\forall\, j>0$, is a formal $x$-integral too. Therefore, if $\wp_1, \wp_2$, $\dots, \wp_r$ are essentially independent formal $x$-integrals of orders less or equal to $\bar{p}+1$, then we can construct essentially independent formal $x$-integrals $\tilde{\wp}_1, \tilde{\wp}_2, \dots, \tilde{\wp}_r$ of order $\bar{p}+1$. The vector $(\tilde{\wp}_1, \tilde{\wp}_2, \dots, \tilde{\wp}_r)^\top$ can be considered as one operator $\Omega = \sum\limits_{i=0}^{\bar{p}+1} \beta_i [u] D_x^i$ such that $\beta_i$ are $r \times n$ matrices, $\operatorname{rank} (\beta_{\bar{p}+1}) = r$ and the equality $D_y \circ \Omega = \sum\limits_{i=0}^{\bar{p}} \gamma_i[u] D_x^i \circ L$ holds for some $r \times n$ matri\-ces~$\gamma_i$.

Equalities \eqref{fint} mean that $w_*$ and $\bar{w}_*$ are formal $x$- and $y$-integrals if $w$ and $\bar{w}$ are $x$- and $y$-integrals, respectively. In addition, formal integrals can be derived from symmetry drivers in some situations. For example, the system
\begin{gather}\label{toda}
u_{xy}^1= \exp\big(2 u^1 - u^2\big), \qquad u_{xy}^2= \exp\big(2 u^2 - u^1\big)
\end{gather}
admits the $x$-symmetry drivers
\begin{gather}
\sigma _1 = \left( \begin{matrix} 1 \\ 1 \end{matrix} \right) D_x + \left( \begin{matrix} u_x^1 \vspace{1mm}\\ u_x^2 \end{matrix} \right), \!\qquad
\sigma _2 = \left( \begin{matrix} 2 \\ 1 \end{matrix} \right) D_x^2 +
\left( \begin{matrix} 3 u^1_x \\ 0 \end{matrix} \right) D_x +
\left( \begin{matrix} \vartheta^2_{xx} - u^1_x \vartheta^2_x + 2 u_x^2 \vartheta^1_x \vspace{1mm}\\ - \vartheta^1_{xx} + u^2_x \vartheta^1_x - 2 u_x^1 \vartheta^2_x \end{matrix} \right),\!\!\!\label{todsi}
\end{gather}
where $\vartheta^1=2 u^1 - u^2$, $\vartheta^2=2 u^2 - u^1$. These symmetry drivers were obtained in \cite{LSh,LSSh} by the so-called descent method without using any information about integrals of \eqref{toda}. According to Lemma~\ref{ops}, the operators $\sigma_1$, $\sigma_2$ satisfy the equalities $L \circ \sigma_i = D_x \circ \sigma_i \circ D_y$, $i=1,2$. On the other hand, it is easy to see that the linearization operator \eqref{lin} of system \eqref{toda} satisf\/ies the equality $\mathcal{N} \circ L = L^\dagger \circ \mathcal{N}$, where
\begin{gather*} \mathcal{N} = \left( \begin{matrix} 2 & -1 \\ -1 & 2 \end{matrix} \right). \end{gather*}
Therefore, $D_y \circ \sigma_i^\dagger \circ D_x \circ \mathcal{N} = \sigma_i^\dagger \circ L^\dagger \circ \mathcal{N} = \sigma_i^\dagger \circ \mathcal{N} \circ L$ and
\begin{gather}
\wp_1 = - \sigma_1^\dagger \circ D_x \circ \mathcal{N} = (1,1) D_x^2 - \big(\vartheta^1_x, \vartheta^2_x \big) D_x, \nonumber\\
\wp_2 = \frac{1}{3} \sigma_2^\dagger \circ D_x \circ \mathcal{N} =
(1,0) D_x^3 - u^1_x (2,-1) D_x^2 - ( \vartheta^1_{xx} - u_x^2 \vartheta^1_x, u_x^1 \vartheta^2_x ) D_x\label{todfi}
\end{gather}
are formal $x$-integrals of \eqref{toda}. These formal integrals are essentially independent, and the corresponding operator $\Omega$ for them is
\begin{gather*} \left( \begin{matrix} D_x \circ \wp_1 \\ \wp_2 \end{matrix} \right) =
\left( \begin{matrix} 1 & 1 \\ 1 & 0 \end{matrix} \right) D_x^3 -
\left( \begin{matrix} \vartheta^1_x & \vartheta^2_x \vspace{1mm}\\ 2 u^1_x & - u^1_x \end{matrix} \right) D_x^2
- \left( \begin{matrix} \vartheta^1_{xx} & \vartheta^2_{xx} \vspace{1mm}\\
\vartheta^1_{xx} - u_x^2 \vartheta^1_x & u_x^1 \vartheta^2_x \end{matrix} \right) D_x. \end{gather*}
To derive the above formal integrals from the symmetry drivers, we use the special property of system \eqref{toda} (see Section~\ref{langs} for more details). But special properties are not necessary for the construction of formal integrals. In the next section we demonstrate this by proving that the existence of $n$ symmetry drivers in both $x$- and $y$-direction always implies the existence of formal integrals.

\section{Existence of formal integrals}\label{efl}
The following proposition was proved in \cite{StS}. If system \eqref{hyp} possesses $n$ essentially independent $x$-integrals and $n$ essentially independent $y$-integrals then there exist $n \times n$ matrices $\alpha_i[u]$, $i=\overline{0,k}$, and $\bar{\alpha}_j [u]$, $j=\overline{0,m}$, such that $\det(\alpha_k) \ne 0$, $\det(\bar{\alpha}_m) \ne 0$ and the operators
\begin{gather}\label{sop}
S = \sum_{i=0}^k \alpha_i [u] D_x^i, \qquad \bar{S} = \sum_{i=0}^m \bar{\alpha}_i [u] D_y^i
\end{gather}
map any $n$-dimensional vector respectively composed of elements from $\ker D_y$ and $\ker D_x$ into symmetries of system \eqref{hyp}. But, to prove the above proposition, the work \cite{StS} in fact uses a~more weak condition instead of the existence of the integrals.

Namely, the proof deals with only the linearizations of the def\/ining relations $D_y(w)=0$, $D_x(\bar{w})=0$, which have the form \eqref{fint}. For brevity, here by $w$ and $\bar{w}$ we denote $n$-dimensional vectors composed of the essentially independent $x$-integrals and $y$-integrals, respectively. If we replace $w_*$, $\bar{w}_*$ in \eqref{fint} with dif\/ferential operators
\begin{gather}\label{omg}
\Omega = \sum_{i=0}^{p+1} \beta_i [u] D_x^i, \qquad \bar{\Omega}= \sum_{i=0}^{q+1} \bar{\beta}_i [u] D_y^i,
\end{gather}
then all reasonings of the proof remain valid. Moreover, the linearization operator $L$ in \eqref{fint} (and in all subsequent reasonings) can also be replaced with any operator of the form
\begin{gather}\label{mop}
M= D_x D_y + A[u] D_x + B[u] D_y + C[u],
\end{gather}
where $A$, $B$ and $C$ are $n\times n$ matrices, and this replacement does not harm the proof in \cite{StS} if we simultaneously replace the term `symmetries' with `elements of $\ker M$'.

Thus, applying the reasonings from \cite{StS} and the proof of Lemma~\ref{ops} to this more general situation, we can prove the following statement. (The proof is omitted because it repeats almost verbatim the proof from \cite{StS}.)
\begin{Theorem}\label{t1}
Let the equalities
\begin{gather}\label{fintm}
D_y \circ \Omega = \hat{S}^{\dagger} \circ M, \qquad D_x \circ \bar{\Omega} = \hat{\bar{S}}^{\dagger} \circ M
\end{gather}
hold for an operator $M$ of the form \eqref{mop}, operators $\Omega$, $\bar{\Omega}$ of the form \eqref{omg} and operators
\begin{gather*} \hat{S} = \sum_{i=0}^p \hat{\alpha}_i[u] D_x^i, \qquad \hat{\bar{S}}= \sum_{i=0}^q \hat{\bar{\alpha}}_i[u] D_y^i, \end{gather*}
where $\beta_i$, $\bar{\beta}_i$, $\hat{\alpha}_i$, $\hat{\bar{\alpha}}_i$ are $n \times n$ matrices and $\det (\beta_{p+1}) \ne 0$, $\det(\bar{\beta}_{q+1}) \ne 0$. Then there exist operators $S$, $\bar{S}$ of the form \eqref{sop} such that $\alpha_i$, $\bar{\alpha}_i$ are $n \times n$ matrices, $\det(\alpha_k) \ne 0$, $\det(\bar{\alpha}_m) \ne 0$ and the relations
\begin{gather}\label{fsym}
M \circ S = - \hat{\Omega}^{\dagger} \circ D_y, \qquad M \circ \bar{S} = - \hat{\bar{\Omega}}^{\dagger} \circ D_x
\end{gather}
hold, where $\hat{\Omega}^{\dagger} = -(D_x + B) \circ S$, $\hat{\bar{\Omega}}^{\dagger} = -(D_y + A) \circ\bar{S}$.
\end{Theorem}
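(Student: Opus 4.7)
The plan is to replicate the construction carried out in \cite{StS} with $w_*,\bar w_*$ replaced by $\Omega,\bar\Omega$ and $L$ replaced by $M$. The key conceptual observation is that the argument of \cite{StS} never uses the fact that $w,\bar w$ are honest $x$- and $y$-integrals, nor any specific property of~$L$ beyond its hyperbolic shape \eqref{lin}: it uses only the linearization identities~\eqref{fint}, which are exactly the hypotheses~\eqref{fintm} of Theorem~\ref{t1} with $w_*,\bar w_*,L$ replaced by $\Omega,\bar\Omega,M$. Once this is granted, the same proof should produce operators $S,\bar S$ of the form~\eqref{sop} with non-degenerate leading coefficients.

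First I would put the hypothesis \eqref{fintm} into factorized form. Matching the parts of $D_y\circ\Omega=\hat S^{\dagger}\circ M$ that end in $D_y$, exactly as in the derivation of~\eqref{fact}, one concludes that $\hat S^{\dagger}$ is pure in $D_x$ and obtains
\begin{gather*}
\Omega=\hat S^{\dagger}\circ(D_x+B),\qquad [D_y,\Omega]=\hat S^{\dagger}\circ(AD_x+C),
\end{gather*}
together with the symmetric identities in the $y$-direction for $\bar\Omega$ and $\hat{\bar S}$.

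Next, I would translate the target equalities~\eqref{fsym} into recursions on the matrix coefficients of $S=\sum_{i=0}^{k}\alpha_i[u]D_x^i$. Repeating verbatim the computation in the proof of Lemma~\ref{ops}, with $L$ replaced by $M$ and $-F_{u_x},-F_{u_y}$ by $A,B$, the desired identity $M\circ S=(D_x+B)\circ S\circ D_y$ is equivalent to the chain
\begin{gather*}
(D_y+A)(\alpha_k)=0,\qquad M(\alpha_i)+(D_y+A)(\alpha_{i-1})=0,\quad i=1,\dots,k,\qquad M(\alpha_0)=0,
\end{gather*}
and an analogous chain for $\bar S$. Following \cite{StS}, one then constructs the $\alpha_i$ by interlocking the $x$- and $y$-direction data: $\bar\Omega,\hat{\bar S}$ serve to resolve the top equation $(D_y+A)(\alpha_k)=0$ with a non-singular $\alpha_k$, and $\Omega,\hat S$ are used recursively, through the factorizations obtained above, to fill in the intermediate and bottom coefficients.

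The main obstacle, and indeed the technical heart of the \cite{StS} argument, is the closing of this coupled recursion: assembling $\Omega$, $\bar\Omega$, $\hat S$, $\hat{\bar S}$ into coefficients $\alpha_i,\bar\alpha_i$ that simultaneously satisfy all intermediate equations is delicate, and relies crucially on the non-degeneracy hypotheses $\det(\beta_{p+1})\ne0$ and $\det(\bar\beta_{q+1})\ne0$; these propagate through the construction to guarantee $\det(\alpha_k)\ne0$ and $\det(\bar\alpha_m)\ne0$ in the conclusion. Since every intermediate step parallels the \cite{StS} argument verbatim after the substitutions indicated above, I would organize the final write-up as an explicit dictionary of replacements in \cite{StS}, pointing out at each step where the integral hypothesis is only used through its linearization and hence admits the stated weakening.
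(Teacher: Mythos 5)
Your proposal is essentially the paper's own argument: the paper proves Theorem~\ref{t1} precisely by observing that the proof in \cite{StS} uses only the linearized relations \eqref{fint} and the shape of $L$, so it survives verbatim the replacements $w_*,\bar{w}_*\to\Omega,\bar{\Omega}$ and $L\to M$ (the proof is then omitted with a reference to \cite{StS}), which is exactly your dictionary-of-replacements plan. Your preliminary steps — the factorization $\Omega=\hat{S}^{\dagger}\circ(D_x+B)$ mirroring \eqref{fact} and the coefficient chain obtained by redoing the computation of Lemma~\ref{ops} with $A,B,C$ in place of $-F_{u_x},-F_{u_y},-F_u$ — are correct and coincide with the paper's setup, while the technical core you defer to \cite{StS} is likewise deferred by the paper.
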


Applying Theorem~\ref{t1} to the formal adjoint of \eqref{fsym}, we easily see that the converse statement is also true. We therefore obtain that, brief\/ly speaking, the existence of formal integrals $\Omega$, $\bar{\Omega}$ is equivalent to the existence of symmetry drivers~$S$,~$\bar{S}$. More accurately, taking Def\/initions~\ref{lis} and~\ref{formi}, and comments after them into account, we can reformulate Theorem~\ref{t1} and its converse for $M=L$ in the following compact form.
\begin{Theorem}\label{tc}
System \eqref{hyp} admits $n$ essentially independent Liouville-like $x$-symmetries and~$n$ essentially independent Liouville-like $y$-symmetries if and only if this system possesses $n$ essentially independent formal $x$-integrals and $n$ essentially independent formal $y$-integrals.
\end{Theorem}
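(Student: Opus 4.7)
The plan is to reduce the biconditional to Theorem~\ref{t1} applied twice, once in each direction, using the bridges supplied by Definition~\ref{formi}, Definition~\ref{lis}, and Lemma~\ref{ops} to convert between the ``vectors of objects'' and the matrix-valued operators~$\Omega$, $\bar{\Omega}$, $S$, $\bar{S}$ that appear in Theorem~\ref{t1}.

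For the implication from formal integrals to Liouville-like symmetries, I would take $n$ essentially independent formal $x$-integrals $\wp_1,\dots,\wp_n$ and, using the construction described right after Definition~\ref{formi} (composing with suitable powers of~$D_x$ to equalize orders and then stacking as rows), assemble them into a single operator~$\Omega$ of the form \eqref{omg} with $\det(\beta_{p+1})\ne 0$; analogously I would form~$\bar{\Omega}$ from the $n$ essentially independent formal $y$-integrals. The defining relation~\eqref{fintc} of a formal $x$-integral, applied row by row, gives an identity $D_y\circ\Omega=\mathfrak{S}\circ L$ with some matrix-valued~$\mathfrak{S}$ that only involves powers of~$D_x$; since any such $\mathfrak{S}$ can be written as $\hat{S}^{\dagger}$ for some operator $\hat{S}$ in $D_x$ (the formal adjoint is an involution on operators in a single derivation), this is exactly the first hypothesis in~\eqref{fintm} with $M=L$. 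The $y$-counterpart is symmetric, so Theorem~\ref{t1} applies and yields operators~$S$, $\bar{S}$ of the form~\eqref{sop} with $\det(\alpha_k)\ne 0$, $\det(\bar{\alpha}_m)\ne 0$ satisfying~\eqref{fsym}. The first relation in~\eqref{fsym} reads $L\circ S=-\hat{\Omega}^{\dagger}\circ D_y$; reading it column by column and invoking Lemma~\ref{ops} shows that the columns of~$S$ are $x$-symmetry drivers, and the nondegeneracy of~$\alpha_k$ guarantees essential independence. The $y$-direction is identical under $x\leftrightarrow y$.

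For the converse I would start with $n$ essentially independent $x$-symmetry drivers $\sigma_1,\dots,\sigma_n$ and, after equalizing the top $D_x$-powers as in the discussion following Lemma~\ref{ops}, assemble them into a matrix-valued operator~$S$ of the form~\eqref{sop} with $\det(\alpha_k)\ne 0$; likewise for~$\bar{S}$. Lemma~\ref{ops} delivers the identity $L\circ S=\varrho\circ D_y$, and the analogous identity for~$\bar{S}$. Taking formal adjoints converts these into $D_y\circ(-\varrho^{\dagger})=S^{\dagger}\circ L^{\dagger}$ and its $y$-counterpart, which is precisely~\eqref{fintm} with $M=L^{\dagger}$. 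A short computation confirms that $L^{\dagger}$ is again of the form~\eqref{mop} (the coefficients merely acquire transpositions and additional total-derivative terms), so Theorem~\ref{t1}, which is stated for arbitrary $M$ of the form~\eqref{mop}, applies. It produces matrix-valued operators $\tilde{S}$, $\tilde{\bar{S}}$ satisfying~\eqref{fsym} with $M=L^{\dagger}$; one further adjoint turns these into relations $D_y\circ\hat{\Omega}=\tilde{S}^{\dagger}\circ L$ and its $y$-analogue, which are exactly~\eqref{fintc}, and whose leading coefficients are nondegenerate by construction, exhibiting the required essentially independent formal integrals.

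The theorem itself is not the obstacle, since Theorem~\ref{t1} does all the hard work and is quoted from~\cite{StS}. The genuine bookkeeping step is the second paragraph above: one must verify that $L^{\dagger}$ is of the form~\eqref{mop} so that Theorem~\ref{t1} is legitimately applicable to it, and then keep track through the two successive formal adjoints of how transpositions, signs, and shifts in $D_x$- and $D_y$-orders affect the leading coefficients, so that the ``essentially independent'' qualifier is transported faithfully between symmetry drivers and formal integrals in both directions.
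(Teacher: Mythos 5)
Your proposal is correct and follows essentially the same route as the paper: the forward direction is Theorem~\ref{t1} with $M=L$, and the converse is obtained, just as the paper indicates, by taking the formal adjoint of the symmetry-driver relations (i.e., of~\eqref{fsym}) and applying Theorem~\ref{t1} with $M=L^{\dagger}$, which is again of the form~\eqref{mop}, followed by one more adjoint. The stacking and leading-coefficient bookkeeping you flag is exactly the content of the paper's remarks after Definitions~\ref{lis} and~\ref{formi}, so nothing essential is missing.
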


When the present paper was preparing for the publication, it was conjectured in \cite{Levi} that the Darboux integrability is equivalent to the existence of symmetries depending on arbitrary functions. This conjecture was formulated for both partial dif\/ferential and partial dif\/ference equations, the form of which was not explicitly specif\/ied. As it is noted in the introduction, this conjecture was, in fact, already proved for scalar equations \eqref{hyp}. For systems \eqref{hyp}, the above hypothesis almost follows from Theorem~\ref{tc} because the Darboux integrability provides us with formal integrals and we need only to derive the existence of `genuine' integrals from the existence of formal ones. The situation is the same for scalar discrete and semi-discrete analogues of~\eqref{hyp} because discrete and semi-discrete versions of Theorem~\ref{tc} are also valid according to \cite{Starxiv}.

Since formal $x$- and $y$-integrals map symmetries into $\ker D_y$ and $\ker D_x$, respectively, we can try to obtain integrals by applying the formal integrals to a symmetry of system \eqref{hyp}. For example, any autonomous system and, in particular, the system \eqref{toda} admit the symmetry $u_x$. The formal integrals \eqref{todfi} map this symmetry into the essentially independent $x$-integrals
\begin{gather*} \wp_1(u_x)= u^1_{xxx} + u^2_{xxx} - u^1_{xx} \vartheta^1_x - u^2_{xx} \vartheta^2_x, \\
\wp_2(u_x)=u^1_{xxxx} - u^1_x \vartheta^1_{xxx} - u^1_{xx} \vartheta^1_{xx}-u^2_{xx} u^1_x \vartheta^2_x +
u^1_{xx} u^2_x \vartheta^1_x. \end{gather*}
Note that symmetries of \eqref{hyp} always exist together with formal integrals by Theorem~\ref{tc}. In particular, the symmetry $u_x$ of \eqref{toda} can also be obtained by the formula $u_x=\sigma_1(1)$, where $\sigma_1$ is def\/ined by \eqref{todsi}.

The work \cite{ZhibSok} of\/fers another way to derive integrals from symmetry drivers and formal integrals. Let $\sigma$ be an $x$-symmetry driver and $\wp$ be a formal $x$-integral. Then the operator $\wp \circ \sigma$, which can be rewritten as $\sum\limits_{i=0}^j w_i[u] D_x^i$, maps $\ker D_y$ into $\ker D_y$ again. But this is possible only if $w_i \in \ker D_y$. For example, let us consider the composition $\wp_2 \circ \sigma_1$, where $\wp_2$ and $\sigma_1$ are respectively def\/ined by \eqref{todfi} and \eqref{todsi}. The direct calculation gives us
\begin{gather*} \wp_2 \circ \sigma_1 = D_x^4 + w_2 D_x^2 + 3 w_1 D_x + D_x(w_1), \end{gather*}
where
\begin{gather*} w_1=u_{xxx}^1 + u_x^1 \big(u_{xx}^2 - 2 u_{xx}^1\big) + \big(u_x^1\big)^2 u_x^2 - u_x^1 \big(u_x^2\big)^2,\\
 w_2= u_{xx}^1+u_{xx}^2 + u_x^1 u_x^2 - \big(u_x^1\big)^2 - \big(u_x^2\big)^2 \end{gather*}
are essentially independent $x$-integrals of smallest orders for \eqref{toda}.

Thus, in the context of Theorem~\ref{tc} we have enough tools to construct `genuine' integrals for concrete examples of system \eqref{hyp}. But these tools may give us functions of $x$ or $y$ only instead of $x$- or $y$-integrals, respectively. Therefore, in the general situation we need to prove that these tools generate not only functions of $x$ or $y$ but also $n$ essentially independent integrals in each of the characteristics.
Such a proof is still absent.

\section{Conservation laws, cosymmetries and Noether operators}\label{nomot}
\begin{Definition} An equality of the form
\begin{gather}\label{conl}
D_x(a[u])=D_y(b[u]),
\end{gather}
where $a$ and $b$ are scalar functions, is called a \emph{conservation law} of system \eqref{hyp}. The conservation law is said to be \emph{trivial} if there exists a scalar function $c[u]$ such that $a=D_y(c)$, $b=D_x(c)$.
\end{Definition}
\begin{Definition} An $n$-component vector function $g[u]$ is called a \emph{cosymmetry} of system \eqref{hyp} if $L^{\dagger}(g)=0$, where $L$ is def\/ined by formula \eqref{lin}.
\end{Definition}
The cosymmetries are also known as `adjoint symmetries' and `conserved covariants' (see, e.g., \cite{SCC} and \cite{ff}, respectively). Following many works (such as \cite{DemSok,Serg}), the author prefers the term `cosymmetry' because of its briefness.

It is known that a characteristic of a non-trivial conservation law coincides with a non-zero cosymmetry on solutions of \eqref{hyp} (see, for example, \cite[Theorem 4.26, Proposition 5.49 and equation (5.83)]{Olv}). Recall that any conservation law~\eqref{conl} can be represented in the characteristic form
\begin{gather}\label{ccl}
\frac{{\tt d}\tilde{a}}{\tt dx} - \frac{{\tt d}\tilde{b}}{\tt dy} = \tilde{G} \cdot (u_{xy} - F(x,y,u,u_x,u_y)),
\end{gather}
where $\cdot$ denotes the scalar product, the scalar functions $\tilde{a}\{u\}$ and $\tilde{b}\{u\}$ respectively coincide with~$a$ and~$b$ on solutions of \eqref{hyp}, and the $n$-component vector function $\tilde{G}\{u\}$ is called a~\emph{cha\-rac\-teristic} of conservation law~\eqref{conl}. Here $\frac{\tt d}{\tt dx}$ and $\frac{\tt d}{\tt dy}$ denote the total derivatives with respect to~$x$ and~$y$, respectively, i.e.,
\begin{gather*} \frac{{\tt d} h}{\tt dx} =
{\frac{\partial h}{\partial x}}+ \sum^{\infty }_{i=0} \sum^{\infty }_{j=0} {\frac{\partial h}{\partial u_{i,j}}} u_{i+1,j}, \qquad
\frac{{\tt d} h}{\tt dy} =
{\frac{\partial h}{\partial y}}+ \sum^{\infty }_{i=0} \sum^{\infty }_{j=0} {\frac{\partial h}{\partial u_{i,j}}} u_{i,j+1}, \qquad u_{i,j}:=\frac{\partial^{i+j} u}{\partial x^{i} y^{j}} \end{gather*}
for any function $h\{u\}$. It should be noted that \eqref{ccl} is valid for any function $u(x,y)$ (in contrast to most other equations in the present paper, which take into account that $u$ is an arbitrary solution of a system under consideration). Applying the Euler operator (the variational derivative) to both sides of \eqref{ccl} and then restricting our consideration to solutions of~\eqref{hyp}, we have $L^{\dagger}(G)=0$, where $G[u]$ is obtained from $\tilde{G}$ via excluding the mixed derivatives by virtue of \eqref{hyp} ($G =\tilde{G}$ on solutions of \eqref{hyp}). According to \cite[Theorem~4.26]{Olv}, $G=0$ if and only if the corresponding conservation law is trivial. It should be noted that not all cosymmetries are characteristics of conservation laws.

An `alternative' way for deriving a cosymmetry from a conservation law is to linearize both sides of \eqref{conl} with taking \eqref{dlin} into account. This gives us
\begin{gather}\label{lcl}
D_x \circ a_* - D_y \circ b_* = \Gamma \circ L, \qquad \Gamma = \sum_{i=0}^q \bar{\gamma}_i[u] D_y^i + \sum_{i=0}^{p} \gamma_i[u] D_x^i, \qquad p, q \ge 0.
\end{gather}
The formal adjoint of this equation is
\begin{gather*} \left( b_* \right)^\dagger \circ D_y - \left( a_* \right)^\dagger \circ D_x = L^\dagger \circ \Gamma^\dagger,\end{gather*}
and $\Gamma^\dagger$ therefore maps any constant into $\ker L^\dagger$. Thus, $\Gamma^\dagger(1)$ is a cosymmetry of~\eqref{hyp}. An accurate checking shows that $\Gamma^\dagger(1)$ coincides with the characteristic $G$ of the conservation law, but the equality $\Gamma^\dagger(1)=G$ is not used below and we omit its proof.

Conversely, any cosymmetry generates a formal analogue of the linearized conservation \linebreak law~\eqref{lcl}, i.e., allows us to construct an operator equality of the form
\begin{gather}\label{flcl}
D_x \circ \Phi - D_y \circ \Psi = \Gamma \circ L,
\end{gather}
where the form of $\Gamma$ is given in \eqref{lcl},
\begin{gather*} \Phi = \sum_{i=0}^{q+1} \bar{\varphi}_i[u] D_y^i + \sum_{i=1}^{r} \varphi_i[u] D_x^i, \qquad
\Psi = \sum_{i=1}^{\bar{r}} \bar{\psi}_i[u] D_y^i + \sum_{i=0}^{p+1} \psi_i[u] D_x^i, \qquad r,\bar{r} > 0, \end{gather*}
and $\bar{\varphi}_i$, $\varphi_i$, $\bar{\psi}_i$, $\psi_i$ are $n$-dimensional rows. Indeed, the direct calculation shows that
\begin{gather*} L^\dagger \circ g = g D_x D_y + \big(D_x + F_{u_y}^\top\big)(g) D_y + \big(D_y + F_{u_x}^\top\big)(g) D_x + L^{\dagger} (g)\end{gather*}
for any $n$-dimensional vector $g[u]$. The formal adjoint of this equality is
\begin{gather*} g^{\top} L - \big( L^{\dagger} (g) \big)^\top = D_x D_y \circ g^\top - D_y \circ \big( D_x\big(g^\top\big) + g^\top F_{u_y} \big) - D_x \circ \big( D_y\big(g^\top\big) + g^\top F_{u_x} \big)\\
\hphantom{g^{\top} L - \big( L^{\dagger} (g) \big)^\top}{}
 = D_y \circ g^\top (D_x - F_{u_y}) - D_x \circ \big( D_y\big(g^\top\big) + g^\top F_{u_x} \big)\\
 \hphantom{g^{\top} L - \big( L^{\dagger} (g) \big)^\top}{} =
D_x \circ g^\top (D_y - F_{u_x}) - D_y \circ \big( D_x\big(g^\top\big) + g^\top F_{u_y} \big). \end{gather*}
If $L^{\dagger}(g)=0$, then the above equation takes the form~\eqref{flcl}. Applying both sides of this equation to an $n$-dimensional vector~$f[u]$, we obtain
\begin{gather}
g \cdot L (f) - L^{\dagger}(g) \cdot f = D_y \big( g \cdot (D_x - F_{u_y})(f) \big) - D_x \big( f \cdot \big(D_y + F_{u_x}^\top\big)(g) \big)\nonumber \\
\hphantom{g \cdot L (f) - L^{\dagger}(g) \cdot f }{}
 = D_x \big( g \cdot (D_y - F_{u_x})(f) \big) - D_y \big( f \cdot \big(D_x + F_{u_y}^{\top}\big)(g) \big).\label{fgrp}
\end{gather}

Equation \eqref{fgrp} becomes a conservation law if $f$ is a symmetry and $g$ is a cosymmetry of~\eqref{hyp}. This fact in a more general form was mentioned in~\cite{AB,AB2} (while another method for the reconstruction of conservation laws from their characteristics was the main subject of these works). However, \eqref{fgrp} generates conservation laws that may be (and usually are) trivial. This is not surprising because the operator equality \eqref{flcl} is a generalization of \eqref{lcl} and, accordingly, the application of~\eqref{flcl} to a symmetry $f[u]$ is the direct copy of the same operation for \eqref{lcl}. And the latter operation gives rise to the conservation law $D_x(\partial _f (a))= D_y(\partial _f (b))$, which is usually trivial too. But, like the conservation law $D_x(\partial _f (a))= D_y(\partial _f (b))$, even a trivial conservation law obtained via~\eqref{fgrp} may be a dif\/ferential consequence of a non-trivial conservation law, and we can restore this non-trivial conservation law from its trivial consequence.

As an illustration, let us consider the conservation law
\begin{gather}\label{scl}
D_x\left( \frac{u_y^2}{2}\right) + D_y( \cos u)=0
\end{gather}
for the sine-Gordon equation $u_{xy}=\sin u$. The linearization
\begin{gather*} D_x \circ u_y D_y - D_y \circ \sin u = u_y (D_x D_y - \cos u) \end{gather*}
of this conservation law shows that the corresponding cosymmetry is~$u_y$. Taking into account that $u_y$ is a symmetry too and setting $g=f=u_y$ in~\eqref{fgrp}, we obtain the conservation law
\begin{gather}\label{scl2}
D_x (u_y u_{yy}) - D_y(u_y \sin u) = 0,
\end{gather}
which is trivial because $2 u_y u_{yy} = D_y ( u_y^2)$ and $2 u_y \sin u = D_x (u_y^2)$. But, at the same time, $u_y \sin u = - D_y(\cos u)$ and \eqref{scl2} is the dif\/ferential consequence{\samepage
\begin{gather*} D_y \left (D_x\left( \frac{u_y^2}{2}\right) + D_y( \cos u) \right) =0 \end{gather*}
of the non-trivial conservation law~\eqref{scl}.}

Equation \eqref{fgrp} is a particular case of the formula
\begin{gather}\label{fgr}
g \cdot Z (f) - Z^{\dagger}(g) \cdot f = {\tt Div} (\Lambda),
\end{gather}
which can be considered as a less formal def\/ining relation for the adjoint operator and therefore holds for any dif\/ferential operator $Z$ of a very general form (see, for example, \cite{Olv} for more details). As above, equation \eqref{fgr} becomes a conservation law when $g \in \ker Z^\dagger$ and $f \in \ker Z$. It is suf\/f\/icient to know only a `cosymmetry' $g \in \ker Z^\dagger$ (only a `symmetry' $f \in \ker Z$) for obtaining a conservation law from \eqref{fgr} if there exists an operator that maps $\ker Z^\dagger$ into $\ker Z$ ($\ker Z$ into $\ker Z^\dagger$).

\begin{Definition}\label{nod}
An operator $\mathcal{N}$ is called a \emph{Noether operator} (an \emph{inverse Noether operator}) for an operator $Z$ if there exists a dif\/ferential operator $\breve{\mathcal{N}}$ such that $Z \circ \mathcal{N} = \breve{\mathcal{N}} \circ Z^\dagger$
($Z^\dagger \circ \mathcal{N} = \breve{\mathcal{N}} \circ Z$). We say that a Noether operator $\mathcal{N}$ is \emph{trivial} if $\mathcal{N} = \theta \circ Z^\dagger$ ($\mathcal{N} = \theta \circ Z$) for some dif\/ferential operator $\theta$. In particular, an operator $\mathcal{N}$ of the form \eqref{do} is called a Noether operator (an inverse Noether operator) for system \eqref{hyp} if the corresponding def\/ining relation holds when $Z$ is the linearization operator \eqref{lin} of this system.
\end{Definition}

Less formally, Noether operators (inverse Noether operators) map $\ker Z^\dagger$ into $\ker Z$ ($\ker Z$ into $\ker Z^\dagger$), and trivial Noether operators (that obviously always exist) map $\ker Z^\dagger$ ($\ker Z$) to the zero vector. As far as the author knows, this meaning of the terms `Noether operator' and `inverse Noether operator' are not most common. We follow the work \cite{ff}, in which these terms were used for operators that map cosymmetries of evolution systems into symmetries of these systems and perform transformations in the inverse direction, respectively. Below we also use the term \emph{``Noether's operators''} to designate operators from the set that includes both Noether and inverse Noether operators.

In addition to the derivation of conservation laws from \eqref{fgr}, a Noether operator for the li\-nearization operator of a system can be used to obtain symmetries from non-trivial conservation laws of this system because the characteristics of such conservation laws are non-zero cosym\-met\-ries. (An illustration of such application of Noether operators is included in Section~\ref{langs}.)

\begin{Remark}\label{recur}
It is easy to see that the composition of a Noether operator $\bar{\mathcal{N}}$ and an inverse Noether operator $\mathcal{N}$ is a `recursion operator' for $Z$, i.e., $\bar{\mathcal{N}} \circ \mathcal{N}$ maps $\ker Z$ into $\ker Z$. In other words, Noether operators and inverse Noether operators of the same operator $Z$ are inverse up to `recursion operators' of $Z$. Using the same logic, symmetry drivers could also be called inverse formal integrals since, for example, the composition $\mathcal{R}$ of an $x$-symmetry driver of \eqref{hyp} and a~formal $x$-integral of the same system is a~recursion operator of~\eqref{hyp} (i.e., $\mathcal{R}$ maps the kernel of the linearization operator~\eqref{lin} for the system~\eqref{hyp} into $\ker L$ again).
\end{Remark}

\begin{Remark}\label{nonloc}
It should be emphasized that $\breve{\mathcal{N}}$ and $\theta$ in Def\/inition~\ref{nod} are dif\/ferential (i.e., local) operators (because a \emph{formal} representation $\tilde{Z}=\theta \circ Z$ with non-local $\theta$ does not guarantee that $\tilde{Z}(f)=0$ follows from $Z(f)=0$). For example, $D_y$ is a non-trivial Noether operator for $Z=D_x \circ D_y$ despite the formal representation $D_y=\theta \circ Z$, where $\theta = D_x^{-1}$. Less formally, $D_y$ is non-trivial because it does not map all elements of $\ker \left(D_x \circ D_y\right)$ to zero. This ref\/lects the fact that $D_x^{-1} \circ D_x$ is not the identity mapping in reality, and that non-localities need an accurate treatment. Such treatment is beyond the scope of the present paper and we consider local objects only (with a small exception at the beginning of the next section).
\end{Remark}

\section{Some classes of systems with non-trivial Noether operators}\label{noex}
The f\/irst two sentences of Def\/inition~\ref{nod} are formulated in a general way and impose no restrictions on the type of operators (except the locality of $\breve{\mathcal{N}}$ and $\theta$). Therefore, this def\/inition can be applied to operators of a very general form (in particular, dif\/ferentiations other than~$D_x$ and~$D_y$ can be used to construct them). In addition, we can easily adapt Def\/inition~\ref{nod} to def\/ine Noether's operators for partial dif\/ferential systems of any form.

For example, let us consider an evolution system
\begin{gather}\label{evs}
u_t=f(x,u,u_x,\dots,u_k).
\end{gather}
The work \cite{ff} uses the operator equality
\begin{gather}\label{noevs}
(\partial_f - f_*) \circ \mathcal{N} = \mathcal{N} \circ \big(\partial_f + f_*^\dagger\big)
\end{gather}
as a def\/ining relation for a Noether operator $\mathcal{N}$ of \eqref{evs}. By Def\/inition~\ref{nod}, the equality~\eqref{noevs} also means that $\mathcal{N}$ is a Noether operator for the linearization operators $\partial_f - f_*$ of \eqref{evs} and, hence, for the system~\eqref{evs} itself (if the coef\/f\/icients of $\mathcal{N}$ can be expressed in terms of~$x$,~$u$ and its derivatives with respect to~$x$). Note that~\eqref{noevs} uses $\partial_f$ instead of $d/dt$; this means that~\eqref{noevs} is considered on solutions of~\eqref{evs}. Hamiltonian operators gives us the most known class of Noether operators for evolution systems: a~Hamiltonian operator $\mathcal{H}$ is a Noether operator for any system of the form $u_t=\mathcal{H}(\delta g /\delta u)$, where $\delta /\delta u$ denotes the variational derivative and~$g$ is a scalar function of~$x$,~$u$ and derivatives of $u$ with respect to~$x$.

In this section, we focus on systems of the form \eqref{hyp} only. Interpreting $y$ as a `time' va\-riab\-le, we can rewrite a part of systems \eqref{hyp} in the Hamiltonian form with non-local $\mathcal{H}$. But these Hamiltonian operators $\mathcal{H}$ are not Noether operators for the corresponding systems if we consider these systems in their original form \eqref{hyp}. For example, the sine-Gordon equation can be represented as $u_y= - D_x^{-1} (\frac{\delta \cos u}{\delta u})$, but $D_x^{-1}$ does not map the cosymmetry $u_x$ of the original equation $u_{xy}=\sin u$ into a symmetry. This is because the sets of the cosymmetries are not the same for $u_y= - D_x^{-1} (\frac{\delta \cos u}{\delta u})$ and $u_{xy}=\sin u$. (The representations~\eqref{ccl} and the characteris\-tics~$\tilde{G}$ in them are changed when we change the second factor in the right-hand side of~\eqref{ccl} and permit to express~$\partial^i u /\partial y^i$ by using non-localities.) In addition, to the author's best knowledge, all system~\eqref{hyp} admitting Hamiltonian representations are Euler--Lagrange systems and, therefore, have simple Noether operators applicable to the systems~\eqref{hyp} in their original form (see Section~\ref{langs}). Taking the just-mentioned facts into account and omitting details concerning non-localities (see Remark~\ref{nonloc}), we do not consider Hamiltonian systems as a separate case and give only schematic comments about such systems.

If a system of the form~\eqref{evs} possesses a conservation law
\begin{gather*} \partial_f (b(x,u,u_x,\dots,u_m))=D_x(a(x,u,u_x,\dots,u_{m+k-1})),\end{gather*}
then $\delta b /\delta u$ is a characteristics of the conservation laws. Hence, $\delta b /\delta u$ is a cosymmetry of~\eqref{evs}. Applying this to Hamiltonian representations $u_y=\mathcal{H}(\delta g /\delta u)$ of systems~\eqref{hyp} and replacing $D_y$ with $\partial_{\mathcal{H}(\delta g /\delta u)}$ in the def\/ining relation for $x$-integrals, we obtain that $\mathcal{H}(\delta w /\delta u)$ is a symmetry if $w$ is an $x$-integral of \eqref{hyp}. Since $\xi(x) w$ is also an $x$-integral for any $\xi(x)$ and $\frac{\delta }{\delta u} (\xi(x) w) = (w_*)^\dagger(\xi)$, the operator $\mathcal{H}\circ (w_*)^\dagger$ maps any scalar function $\xi(x)$ into a symmetry. The last fact was obtained in the general form and then used for a subclass of so-called sigma models\footnote{An Euler--Lagrange system is called a sigma model if its Lagrangian has the form $\sum\limits_{i=1}^{n} \sum\limits_{j=1}^{n} H_{ij}(u) u_x^i u_y^j + \zeta (u)$.} in \cite{Demsk}, but this work gives no answer why the operators $\mathcal{H}\circ (w_*)^\dagger$ are local. (They are local in all examples considered in \cite{Demsk}.) We give the answer in Section~\ref{langs} by demonstrating that, in the case of the sigma models, these operators coincide with local $x$-symmetry drivers obtained via a formula for more general Euler--Lagrange systems~\eqref{hyp}.

Let $W$ be an $n$-dimensional vector and the components of $W$ be essentially independent $x$-integrals of minimal orders for system \eqref{hyp}. If this system can be rewritten in the Hamiltonian form $u_y=\mathcal{H}(\delta g /\delta u)$, then $\tilde{\mathcal{H}}=W_* \circ \mathcal{H} \circ W_*^\dagger$ is the composition of the formal integral $W_*$ and the $x$-symmetry driver $\mathcal{H} \circ W_*^\dagger$. Therefore (see the penultimate paragraph of Section~\ref{efl}), the coef\/f\/icients of $\tilde{\mathcal{H}}$ belong to $\ker D_y$ and, according to \cite{ZhG,avz,ZMHS}, can be expressed in terms of~$x$,~$W$ and total derivatives of~$W$ with respect to $x$. On the other hand, the formula for $\tilde{\mathcal{H}}$ coincides with the well-known rule for recalculating a Hamiltonian operator $\mathcal{H}$ under a transformation $v=W[u]$ (see, for example,~\cite{ff,KupW}). Thus, taking the last sentence of the previous paragraph into account, we see that the sigma models admitting the full set of $x$-integrals allow us to construct local Hamiltonian operators. This was noted (without establishing the locality of $\tilde{\mathcal{H}}$) in~\cite{Demsk} and proved for a special subclass of the sigma models in~\cite{KisL}.

\subsection{Euler--Lagrange systems}\label{langs}
Any system \eqref{hyp} has a non-trivial Noether operator if this system is derived from a Lagrangian of the form
\begin{gather}\label{lang}
u_x \cdot H(x,y,u) u_y + \mu (x,y,u)u_x + \bar{\mu} (x,y,u) u_y + \zeta (x,y,u),
\end{gather}
where $\zeta$ is a scalar function, $\mu$, $\bar{\mu}$ are $n$-dimensional rows, and $H$ is an $n\times n$ matrix. It is easy to check (see, for example, \cite{intmat}) that linearization operators \eqref{lin} of such systems satisfy the operator identity $\mathcal{N} \circ L = L^\dagger \circ \mathcal{N}$, where $\mathcal{N} = H + H^\top$. Hence, $\mathcal{N}$ is an inverse Noether operator, and, if $\mathcal{N}$ is non-degenerate\footnote{We cannot guarantee the equivalence of the corresponding Euler--Lagrange system to a system of the form~\eqref{hyp} if $\mathcal{N}$ is degenerate.}, $\bar{\mathcal{N}}= (H^\top +H)^{-1}$ is a Noether operator of system \eqref{hyp}. For example, $\mathcal{N}$ is the identity mapping ($\mathcal{N}=1$) for the aforementioned sine-Gordon equation (as well as for any system of the form $u_{xy}=\zeta_u^\top$ generated by \eqref{lang} with $H=1/2$, $\mu=\bar{\mu}=0$).

The operator $\bar{\mathcal{N}}$ allows us, in particular, to construct a symmetry driver from any integral of~\eqref{hyp}. Indeed, if $D_y(w)=0$, then \eqref{dlin} gives rise to $D_y \circ w_* = \mathfrak{S} \circ L$, where $w_*= \mathfrak{S} \circ (D_x - F_{u_y})$ in accordance with \eqref{fact}. But the formal adjoint $L^\dagger \circ \mathfrak{S}^\dagger = - (w_*)^\dagger \circ D_y$ means that $\mathfrak{S}^\dagger$ maps $\ker D_y$ into cosymmetries and $\bar{\mathcal{N}} \mathfrak{S}^\dagger (g[u])$ is therefore a symmetry of \eqref{hyp} for any scalar function $g \in \ker D_y$. Thus,
\begin{gather}\label{nos}
\sigma= \bar{\mathcal{N}} \circ \big(D_x + F_{u_y}^\top\big)^{-1} \circ (w_*)^\dagger
\end{gather}
is a \emph{local} dif\/ferential operator that maps $\ker D_y$ into symmetries. For example, this formula easily gives us the symmetry drivers $\sigma =D_x+u_x$ and $\bar{\sigma}=D_y+u_y$ for the Liouville equation mentioned in the introduction if we apply \eqref{nos} to the integrals \eqref{leint} and take into account that $\bar{\mathcal{N}}=1$ for this equation.

Now let us compare \eqref{nos} and the above mentioned operator $\mathcal{H} \circ (w_*)^\dagger$ in the case of the sigma models. If \eqref{hyp} is an Euler--Lagrange system with a Lagrangian of the form $u_x \cdot H(u) u_y + \zeta (u)$ and $H+H^\top$ is non-degenerate, then \cite{Moh} the operator $\mathcal{H}^{-1}:=(H+H^\top) \left( D_x - F_{u_y}\right)$ is symplectic and the system \eqref{hyp} can be written in the Hamiltonian form $u_y=\mathcal{H}(\delta \zeta /\delta u)$. Since symplectic operators are skew-symmetric, we see that the operator $-\mathcal{H} \circ (w_*)^\dagger$ coincides with the right-hand side of \eqref{nos} and therefore is local.

Note that, in contrast to Theorem~\ref{tc}, it is suf\/f\/icient to have only one integral for constructing the symmetry driver $\sigma$ via formula \eqref{nos}. In particular, the f\/irst and second symmetry drivers~\eqref{plrs} are respectively derived from the f\/irst and second integrals \eqref{plri} by using \eqref{nos} with
\begin{gather*} \bar{\mathcal{N}} = \big(u^1 u^2 + c\big) \left( \begin{matrix} 0 & 1 \\ 1 & 0 \end{matrix} \right). \end{gather*}
This and other applications of \eqref{nos} to systems~\eqref{hyp} can be found in~\cite{fi}. Naturally, \eqref{nos}~can also be derived from the classical Noether theorem in the case of Euler--Lagrange systems (see~\mbox{\cite{KisL, intmat}}). But it is easy to see that~\eqref{nos} is also valid for non-Lagrangian system~\eqref{hyp} if this system admits a Noether operator~$\bar{\mathcal{N}}$ of the form~\eqref{do}.

Not all integrable systems \eqref{hyp} are generated by Lagrangians \eqref{lang}. This is easy to see, in particular, from the existence of integrable systems the right-hand sides $F$ of which depend nonlinearly on $u_x$ or $u_y$. Many integrable equations of such kind can be found in \cite{MeshSok, ZhibSok}. For example, $u_{xy}=u_x \sqrt{u_y}$ is Darboux integrable in accordance with~\cite{MeshSok} and depends nonlinearly on $u_y$. Therefore, no Lagrangian~\eqref{lang} corresponds to this equation.

But systems \eqref{hyp} often possess non-trivial Noether's operators regardless of whether these systems have Lagrangians. In the rest part of the article we consider some of such cases.

\subsection{Darboux integrable systems}\label{dis} Any Darboux integrable system admits non-trivial Noether's operators in four dif\/ferent `directions'. Indeed, equation \eqref{fsym} and its formal adjoint imply
\begin{gather*} M \circ S \circ \hat{\Omega} = - \hat{\Omega}^{\dagger} \circ D_y \circ \hat{\Omega} =- \hat{\Omega}^{\dagger} \circ S^{\dagger} \circ M^{\dagger}, \\
 M \circ \bar{S} \circ \hat{\bar{\Omega}} = - \hat{\bar{\Omega}}^{\dagger} \circ D_x \circ \hat{\bar{\Omega}} = - \hat{\bar{\Omega}}^{\dagger} \circ \bar{S}^{\dagger} \circ M^{\dagger}, \end{gather*}
i.e., the dif\/ferential operators $S\circ \hat{\Omega}$ and $\bar{S} \circ \hat{\bar{\Omega}}$ are Noether operators for $M$. Analogously, equation~\eqref{fintm} and its formal adjoint imply
\begin{gather*} M^{\dagger} \circ \hat{S} \circ \Omega = - \Omega^{\dagger} \circ \hat{S}^{\dagger} \circ M, \qquad
M^{\dagger} \circ \hat{\bar{S}} \circ \bar{\Omega} = - \bar{\Omega}^{\dagger} \circ \hat{\bar{S}}^{\dagger} \circ M, \end{gather*}
and $\hat{S} \circ \Omega$, $\hat{\bar{S}} \circ \bar{\Omega}$ are inverse Noether operators for $M$. These four Noether's operators are non-trivial because they have no terms of the form $\gamma [u] D_x^i D_y^j$, $i \, j \ne 0$, while $M$ contains the term $D_x D_y$. Since both equations \eqref{fintm}, \eqref{fsym} hold true with $M=L$ for any Darboux integrable system \eqref{hyp}, corresponding $S\circ \hat{\Omega}$, $\bar{S} \circ \hat{\bar{\Omega}}$ are Noether operators, and $\hat{S} \circ \Omega$, $\hat{\bar{S}} \circ \bar{\Omega}$ are inverse Noether operators for this system. More generally, Theorem~\ref{t1} and its converse mean that only one of equations \eqref{fintm} and \eqref{fsym} is suf\/f\/icient for $M$ to have all four above Noether's operators.

\begin{Remark}\label{pno}
For brevity, in the previous paragraph we consider $n$ symmetry drivers (as well as $n$ formal integrals) as one operator. But symmetry drivers and formal integrals can be considered separately, their numbers may be less than $n$ and they may exist in only one of $x$- and $y$-directions. Even under these conditions, we can construct a Noether operator (if at least one symmetry driver exists) and an inverse Noether operator (if at least one formal integral exists). For example, if $\sigma_i$, $i=\overline{1,r}$, are $x$-symmetry drivers, then the same reasoning and Lemma~\ref{ops} give us that $\sigma_i \circ \sigma_j^\dagger \circ \big(D_x + F_{u_y}^\top \big)$ are Noether operators for all positive integers $i, j \le r$.
\end{Remark}

It should be noted that above Noether's operators do not explain the simultaneous existence of the formal integrals $\Omega$, $\bar{\Omega}$ and the symmetry drivers $S$, $\bar{S}$ (i.e., Theorems~\ref{t1} and \ref{tc} cannot be proved by using these operators). To obtain symmetries from conservation laws and, in particular, symmetry drivers from \eqref{fintm}, we need a Noether operator (just as a Noether opera\-tor~$\bar{\mathcal{N}}$ is needed for deriving~\eqref{nos}). But~\eqref{fintm} gives us the inverse Noether operators only. The situation for~\eqref{fsym} is the same: we need an inverse Noether operators to obtain formal integrals, but~\eqref{fsym} gives us the Noether operators only. However, these Noether's operators may possibly be useful for some other purposes.

As an example, let us try to construct conservation laws by using \eqref{fgrp} and one of above Noether's operators. Let $f[u], \tilde{f}[u] \in \ker L$ and $g[u] = \mathcal{N} (\tilde{f})$, where $\mathcal{N}=\hat{S} \circ \Omega$ and $\hat{S}$, $\Omega$ satisfy~\eqref{fintm} for~$M=L$. Substituting this into \eqref{fgrp} and taking $\mathcal{N} (\tilde{f}) \in \ker L^\dagger$ into account, we obtain the following conservation law
\begin{gather}\label{nocl}
D_y \big( \hat{S}\big(\Omega\big(\tilde{f}\big)\big) \cdot (D_x - F_{u_y})(f) \big) = D_x \big( f \cdot \big(D_y + F_{u_x}^\top\big)\big(\mathcal{N}\big(\tilde{f}\big)\big) \big).
\end{gather}
Equations \eqref{fgr} and \eqref{fact} give us
\begin{gather*} \hat{S}\big(\Omega\big(\tilde{f}\big)\big) \cdot (D_x - F_{u_y})(f) = \Omega\big(\tilde{f}\big) \cdot \hat{S}^\dagger \big((D_x - F_{u_y})(f)\big) + D_x(c[u]) \\
\hphantom{\hat{S}\big(\Omega\big(\tilde{f}\big)\big) \cdot (D_x - F_{u_y})(f)}{} = \Omega\big(\tilde{f}\big) \cdot \Omega(f) + D_x(c[u]). \end{gather*}
According to \eqref{fintm}, the operator $\Omega$ maps symmetries into vectors composed of elements from $\ker D_y$, and the conservation law therefore takes the form
\begin{gather*} D_x \big( f \cdot \big(D_y + F_{u_x}^\top\big)\big(\mathcal{N}\big(\tilde{f}\big)\big) \big) = D_x (D_y(c[u])). \end{gather*}
The last equality means that $f \cdot (D_y + F_{u_x}^\top)(\mathcal{N}(\tilde{f})) = D_y(c[u]) + \bar{w}[u]$, where $\bar{w} \in \ker D_x$. Thus, \eqref{nocl} is the sum of a trivial conservation law and relations of the form $D_y (w)=0$, $D_x(\bar{w})=0$. Note that this is consistent with the work \cite{Sak}, in which the same structure for conservation laws of the Liouville equation $u_{xy}=e^u$ was derived from the Noether theorem.

\subsection{Systems that inherit Noether's operators due to dif\/ferential substitutions}\label{notr}
The way for construction of Noether's operators in the previous subsection is a special case of the following situation.
\begin{Lemma}\label{l2}
Let the relation $\tilde{Z} \circ P = Q \circ Z$ hold for differential operators $\tilde{Z}$, $Z$, $P$ and $Q$. Then $Q^\dagger \circ \tilde{\mathcal{N}} \circ P$ is an inverse Noether operator for $Z$ if $\tilde{Z}$ admits an inverse Noether opera\-tor~$\tilde{\mathcal{N}}$, and $P \circ \mathcal{N} \circ Q^\dagger$ is a Noether operator for $\tilde{Z}$ if $\mathcal{N}$ is a Noether operator for~$Z$.
\end{Lemma}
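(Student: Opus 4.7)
The plan is to take formal adjoints of the intertwining relation $\tilde{Z}\circ P = Q\circ Z$ to produce a second identity, and then to chain these two identities together with the defining relation of the given Noether's operator. Applying $\dagger$ and using $(A\circ B)^\dagger = B^\dagger\circ A^\dagger$ gives $P^\dagger\circ \tilde{Z}^\dagger = Z^\dagger\circ Q^\dagger$. This pair of identities is exactly what is needed to push $Z^\dagger$ past the candidate operator and convert it into $Z$ (or vice versa), which is what Definition~\ref{nod} demands.

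For the first claim I would compute $Z^\dagger\circ(Q^\dagger\circ\tilde{\mathcal{N}}\circ P)$ step by step. First use $Z^\dagger\circ Q^\dagger = P^\dagger\circ\tilde{Z}^\dagger$ to obtain $P^\dagger\circ\tilde{Z}^\dagger\circ\tilde{\mathcal{N}}\circ P$. Then invoke the hypothesis $\tilde{Z}^\dagger\circ\tilde{\mathcal{N}} = \breve{\tilde{\mathcal{N}}}\circ\tilde{Z}$ (the defining relation of an inverse Noether operator of $\tilde{Z}$) to move $\tilde{Z}$ to the right, producing $P^\dagger\circ\breve{\tilde{\mathcal{N}}}\circ\tilde{Z}\circ P$. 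Finally the original relation $\tilde{Z}\circ P = Q\circ Z$ yields $(P^\dagger\circ\breve{\tilde{\mathcal{N}}}\circ Q)\circ Z$, which is the required form with $\breve{\mathcal{N}} := P^\dagger\circ\breve{\tilde{\mathcal{N}}}\circ Q$.

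For the second claim the calculation is the mirror image: start from $\tilde{Z}\circ(P\circ\mathcal{N}\circ Q^\dagger)$, apply $\tilde{Z}\circ P = Q\circ Z$ to get $Q\circ Z\circ\mathcal{N}\circ Q^\dagger$, use the Noether property $Z\circ\mathcal{N}=\breve{\mathcal{N}}\circ Z^\dagger$ to reach $Q\circ\breve{\mathcal{N}}\circ Z^\dagger\circ Q^\dagger$, and finish with the adjoint intertwining $Z^\dagger\circ Q^\dagger = P^\dagger\circ\tilde{Z}^\dagger$ to obtain $(Q\circ\breve{\mathcal{N}}\circ P^\dagger)\circ\tilde{Z}^\dagger$, i.e., $P\circ\mathcal{N}\circ Q^\dagger$ is a Noether operator for $\tilde{Z}$.

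I do not foresee any real obstacle: the lemma is essentially a conjugation-type identity, and the only delicate point to double-check is locality, namely that the operators witnessing the Noether property for $Z$ and $\tilde{Z}$ remain genuine differential operators in the sense required by Remark~\ref{nonloc}. This is automatic, since $P$, $Q$, $\breve{\mathcal{N}}$, and $\breve{\tilde{\mathcal{N}}}$ are all differential operators by hypothesis and the class of differential operators of the form~\eqref{do} is closed under composition and formal adjoint, so the constructed $P^\dagger\circ\breve{\tilde{\mathcal{N}}}\circ Q$ and $Q\circ\breve{\mathcal{N}}\circ P^\dagger$ are differential as needed.
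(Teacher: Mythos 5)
Your proposal is correct and follows essentially the same route as the paper: adjoint the intertwining relation to get $Z^\dagger\circ Q^\dagger = P^\dagger\circ\tilde{Z}^\dagger$, then chain it with the defining relation of the given Noether (or inverse Noether) operator exactly as in the paper's chain $Z^\dagger \circ Q^\dagger \circ \tilde{\mathcal{N}} \circ P = P^\dagger \circ \breve{\tilde{\mathcal{N}}} \circ Q \circ Z$ and its mirror for the second claim. Your added remark on locality of $P^\dagger\circ\breve{\tilde{\mathcal{N}}}\circ Q$ and $Q\circ\breve{\mathcal{N}}\circ P^\dagger$ is consistent with Definition~\ref{nod} and Remark~\ref{nonloc}, so nothing is missing.
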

\begin{proof} If $\tilde{\mathcal{N}}$ is an inverse Noether operator for $\tilde{Z}$, then $ \tilde{Z}^\dagger \circ \tilde{\mathcal{N}} = \breve{\tilde{\mathcal{N}}} \circ \tilde{Z}$ by Def\/inition~\ref{nod}. Taking this and $Z^\dagger \circ Q^\dagger = P^\dagger \circ \tilde{Z}^\dagger$ into account, we obtain
\begin{gather*} Z^\dagger \circ Q^\dagger \circ \tilde{\mathcal{N}} \circ P = P^\dagger \circ \tilde{Z}^\dagger \circ \tilde{\mathcal{N}} \circ P = P^\dagger \circ \breve{\tilde{\mathcal{N}}} \circ \tilde{Z} \circ P = P^\dagger \circ \breve{\tilde{\mathcal{N}}} \circ Q \circ Z. \end{gather*}
Analogously, we can directly check that $\tilde{Z} \circ P \circ \mathcal{N} \circ Q^\dagger = Q \circ \breve{\mathcal{N}} \circ P^\dagger \circ \tilde{Z}^\dagger$ if $Z \circ \mathcal{N} = \breve{\mathcal{N}} \circ Z^\dagger$.
\end{proof}
It is easy to see that $D_x$, $D_y$ serve as $\tilde{Z}$ in \eqref{fintm} and $Z$ in \eqref{fsym}, while the identity mapping plays the role of Noether's operators $\tilde{\mathcal{N}}$ and $\mathcal{N}$ because $D_x$ and $D_y$ are skew-symmetric.

We can also apply the above lemma to situations when a dif\/ferential substitutions $v=\phi [u]$ maps system \eqref{hyp} into a system $E \{v\} = 0$, where $E \{v\}$ is an $\ell$-dimensional vector depending on $x$, $y$, $v$ and a f\/inite number of derivatives of $v$ (including, generally speaking, the mixed ones). Indeed, let $g\{\phi\}$ denote the function obtained from a function $g\{v\}$ by replacing all ${\partial^{i+j} v}/{\partial x^{i} y^{j}}$ with $D_x^i D_y^j\left(\phi [u]\right)$, and for dif\/ferential operators we let
\begin{gather}\label{dphi}
G\{\phi\}:=\sum _{i=0}^{k} \sum _{j=0}^{m} g_{ij}\{\phi\} D_x^i D_y^j \qquad \text{if} \qquad G\{v\} =\sum _{i=0}^{k} \sum _{j=0}^{m} g_{ij}\{v\} D_x^i D_y^j.
\end{gather}
Then $E\{\phi\}=0$ by the def\/inition of dif\/ferential substitutions. Linearizing $E\{\phi\}=0$ and taking~\eqref{dlin} into account, we obtain
\begin{gather}\label{dplin}
E_* \{\phi\} \circ \phi_* = Q \circ L,
\end{gather}
where $L$ is the linearization operator \eqref{lin} of system \eqref{hyp}, $Q$ is a dif\/ferential operator of the form~\eqref{do}, and
\begin{gather}\label{estar}
E_*\{v\} = \sum^{\infty }_{i=0} \sum^{\infty }_{j=0} {\frac{\partial E\{v\}}{\partial v_{i,j}}} D_x^i D_y^j, \qquad v_{i,j}:=\frac{\partial^{i+j} v}{\partial x^{i} y^{j}}.
\end{gather}
Applying Lemma~\ref{l2} to \eqref{dplin}, we obtain the following statement.
\begin{Corollary}\label{dp2n}
Let a differential substitutions $v=\phi [u]$ map all solutions of \eqref{hyp} into solutions of a system $E \{v\} = 0$. Then $\aleph = Q^\dagger \circ \tilde{\mathcal{N}} \circ \phi_*$ is an inverse Noether operator for system \eqref{hyp} if an operator $\tilde{\mathcal{N}}$ of the form \eqref{do} is an inverse Noether operator for $E_* \{\phi\}$, and $\tilde{\aleph} = \phi_* \circ \mathcal{N} \circ Q^\dagger$ is a Noether operator for $E_* \{\phi\}$ if $\mathcal{N}$ is a Noether operator for system~\eqref{hyp}. Here the differential operator $Q$ is defined by \eqref{dplin}, and $E_* \{\phi\}$ is obtained from the differential operator~\eqref{estar} by rule~\eqref{dphi}.
\end{Corollary}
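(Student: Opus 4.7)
The plan is to recognize that Corollary \ref{dp2n} is an immediate consequence of Lemma \ref{l2} once the operator identity \eqref{dplin} has been written down, so I would present the proof essentially as a dictionary translation of symbols. In Lemma \ref{l2} I would make the assignment $\tilde{Z} := E_*\{\phi\}$, $Z := L$, $P := \phi_*$, with $Q$ taken to be the same operator as in \eqref{dplin}. With these choices, the hypothesis $\tilde{Z} \circ P = Q \circ Z$ of Lemma \ref{l2} is exactly the identity \eqref{dplin}, which has already been established in the paragraph preceding the statement.

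The first half of Lemma \ref{l2} then yields $L^\dagger \circ (Q^\dagger \circ \tilde{\mathcal{N}} \circ \phi_*) = (\phi_*^\dagger \circ \breve{\tilde{\mathcal{N}}}) \circ Q \circ L$ whenever $\tilde{\mathcal{N}}$ is an inverse Noether operator for $E_*\{\phi\}$. By Definition \ref{nod} applied with $Z = L$, this is precisely the assertion that $\aleph := Q^\dagger \circ \tilde{\mathcal{N}} \circ \phi_*$ is an inverse Noether operator for the linearization \eqref{lin} of system \eqref{hyp}, hence for the system itself. Symmetrically, the second half of the lemma gives $E_*\{\phi\} \circ (\phi_* \circ \mathcal{N} \circ Q^\dagger) = Q \circ \breve{\mathcal{N}} \circ \phi_*^\dagger \circ (E_*\{\phi\})^\dagger$ whenever $\mathcal{N}$ is a Noether operator for \eqref{hyp}, identifying $\tilde{\aleph} := \phi_* \circ \mathcal{N} \circ Q^\dagger$ as a Noether operator for $E_*\{\phi\}$.

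There is essentially no obstacle beyond this substitution, since the heavy lifting was done both in establishing \eqref{dplin} (via linearization of $E\{\phi\}=0$ together with the commutation relations \eqref{dlin}) and in proving Lemma \ref{l2}. The only point worth checking is locality in the sense of Remark \ref{nonloc}, namely that $\aleph$ and $\tilde{\aleph}$ are genuine differential operators of the form \eqref{do} rather than formal compositions involving hidden inverses. This is automatic: $\phi_*$ and $Q$ are differential operators by construction, $\tilde{\mathcal{N}}$ and $\mathcal{N}$ are differential by hypothesis, and the formal adjoint of a differential operator is again differential, so every composition appearing in $\aleph$ and $\tilde{\aleph}$ is itself a differential operator of the form \eqref{do}.
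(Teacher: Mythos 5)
Your proposal is correct and matches the paper exactly: the paper derives Corollary~\ref{dp2n} precisely by applying Lemma~\ref{l2} to the identity~\eqref{dplin} with $\tilde{Z}=E_*\{\phi\}$, $Z=L$, $P=\phi_*$, which is your dictionary translation. Your added remark on locality of the compositions is consistent with Remark~\ref{nonloc} and introduces no discrepancy.
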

\begin{Remark} Lemma~\ref{l2} and the above reasoning can also be applied to systems of a form that dif\/fers from \eqref{hyp}. For example, using the formula $(\partial_f (\phi))_*= \partial_f \circ \phi_* - \phi_* \circ (\partial_f -f_*)$ (see \cite{MSY}), we obtain \eqref{dplin} with $Q=\phi_*$ and $L=\partial_f -f_*$ if $v=\phi(x,u,u_x,\dots,u_m)$ map solutions of \eqref{evs} into solutions of a system $v_t=g(x,v,v_x,\dots,v_k)$. This gives us the formulas $\aleph = \left(\phi_*\right)^\dagger \circ \tilde{\mathcal{N}} \circ \phi_*$ and $\tilde{\aleph} = \phi_* \circ \mathcal{N} \circ \left(\phi_*\right)^\dagger$ for recalculating an inverse Noether operators $\tilde{\mathcal{N}}$ of $v_t=g(x,v,v_x,\dots,v_k)$ and a Noether operator $\mathcal{N}$ of~\eqref{evs} under the dif\/ferential substitution. These formulas are specialized versions of the formulas obtained in a dif\/ferent way for the case of B\"acklund transformations of evolution systems in \cite[equation~(24) and Theorem~4]{ff}.
\end{Remark}

Note that if $\tilde{\mathcal{N}}\{v\}$ is an inverse Noether operator for the system $E\{v\}=0$ (i.e., the coef\/f\/icients of $\tilde{\mathcal{N}}\{v\}$ are expressed only in terms of $x$, $y$, $v$ and derivatives of $v$, and the def\/ining relation
\begin{gather}\label{esvn}
\left(E_*\{v\}\right)^\dagger \circ \tilde{\mathcal{N}}\{v\} = \breve{\tilde{\mathcal{N}}}\{v\} \circ E_*\{v\}
\end{gather}
holds for any solution of $E\{v\}=0$), then the def\/ining relation \eqref{esvn} is true for solutions $v=\phi[u]$ too and, hence, $\tilde{\mathcal{N}}\{\phi\}$ is an inverse Noether operator for $E_*\{\phi\}$. Therefore, the system \eqref{hyp} admits the inverse Noether operator $\aleph = Q^\dagger \circ \tilde{\mathcal{N}}\{\phi\} \circ \phi_*$ by the above corollary. But for the operator $\tilde{\aleph}$ we have no guarantee that it can be expressed in terms of $x$, $y$, $v_{i,j}$ only and is a~Noether operator for $E\{v\}=0$ (i.e., not only for $E_* \{\phi\}$).

In addition, we cannot guarantee in the general situation that $\aleph$ and $\tilde{\aleph}$ are non-trivial Noether's operators, but can prove this in special cases. One of such special cases is formulated as follows.
\begin{Corollary}\label{coll2} Let $v(x,y)$ and $\tilde{F}(x,y,v,v_x,v_y)$ be $\ell$-dimensional vectors, $\ell \le n$, and let the system $v_{xy}=\tilde{F}(x,y,v,v_x,v_y)$ admit an inverse Noether operator $\tilde{\mathcal{N}}[v]$ of the form
\begin{gather*} \tilde{\mathcal{N}}[v] = \sum^r_{i=0} \eta_i[v] D_x^i + \sum^{\bar{r}}_{i=1} \bar{\eta}_i[v] D_y^i, \qquad r \ge 0, \, \bar{r}>0, \qquad \lambda[v]:=\det(\eta_r[v]) \ne 0, \end{gather*}
such that either $r>0$ or $\bar{\eta}_i=0$ for all $i$, and either $\operatorname{ord}_x(\lambda[v]) >0$ or $\operatorname{ord}_y(\lambda[v])=0$. If a diffe\-ren\-tial substitution $v=\phi[u]$ maps all solutions of \eqref{hyp} into solutions of $v_{xy}=\tilde{F}(x,y,v,v_x,v_y)$ and $\operatorname{ord}_x(\phi)=k>0$, $\operatorname{rank} (\phi_{u_k} ) = \ell$, then the relation
\begin{gather}\label{dplin2}
\big(D_x D_y - \tilde{F}_{v_x}[\phi] D_x - \tilde{F}_{v_y} [\phi] D_y - \tilde{F}_v [\phi] \big) \circ \phi_* = Q \circ L,
\end{gather}
holds for a differential operator $Q$ of the form \eqref{do}, and $\aleph = Q^\dagger \circ \tilde{\mathcal{N}}[\phi] \circ \phi_*$ is a non-trivial inverse Noether operator for system~\eqref{hyp}.

Here $L$ is defined by \eqref{lin}, $g[\phi]$ denotes the function obtained from a function $g[v]$ by replacing all ${\partial^i v}/{\partial x^i}$, ${\partial^j v}/{\partial y^j}$ with $D_x^i\left(\phi [u]\right)$, $D_y^j\left(\phi [u]\right)$, respectively, and $\tilde{\mathcal{N}}[\phi]$ is obtained from $\tilde{\mathcal{N}}[v]$ by rule \eqref{dphi}.
\end{Corollary}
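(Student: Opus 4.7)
The plan is to establish the three claims of the corollary in turn: the operator identity \eqref{dplin2}, the inverse-Noether property of $\aleph$, and its non-triviality. The first two are essentially formal consequences of earlier results; non-triviality is the main obstacle and is where both displayed hypotheses are used.

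For the first claim I start from the identity $D_x D_y(\phi[u]) = \tilde F(x,y,\phi[u],D_x(\phi[u]),D_y(\phi[u]))$, which holds by virtue of \eqref{hyp} since $v=\phi[u]$ is a differential substitution into $v_{xy}=\tilde F$. Linearizing in $u$ and using \eqref{dlin} to commute $D_x,D_y$ past $\phi_*$ modulo operators of the form $R\circ L$, I collect every correction term into a single operator $Q$ of the form \eqref{do} and read off \eqref{dplin2}. For the second claim, the defining relation of the inverse Noether operator $\tilde{\mathcal N}[v]$ for the $v$-system, $(E_*\{v\})^{\dagger}\circ\tilde{\mathcal N}[v]=\breve{\tilde{\mathcal N}}[v]\circ E_*\{v\}$, survives the substitution $v=\phi[u]$, so $\tilde{\mathcal N}[\phi]$ is an inverse Noether operator for $E_*\{\phi\}$; applying Lemma~\ref{l2} with $\tilde Z=E_*\{\phi\}$, $Z=L$ and $P=\phi_*$ then yields $\aleph$ as an inverse Noether operator for $L$, i.e., for \eqref{hyp}.

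For non-triviality I would use a principal-symbol argument. Matching top $D_x$-orders in \eqref{dplin2} identifies the $D_x^k$-coefficient of $Q$ with $\phi_{u_k}$, so the corresponding coefficient of $Q^{\dagger}$ is $(-1)^k\phi_{u_k}^{\top}$. Composing with the $\eta_r[\phi] D_x^r$-part of $\tilde{\mathcal N}[\phi]$ and the $\phi_{u_k}D_x^k$-part of $\phi_*$ produces in $\aleph$ a pure $D_x^{2k+r}$-term with matrix coefficient $(-1)^k\phi_{u_k}^{\top}\eta_r[\phi]\phi_{u_k}$. Since $\phi_{u_k}$ has full row rank $\ell$ and $\eta_r[\phi]$ is invertible, a short linear-algebra computation shows that this matrix has rank exactly $\ell>0$. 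The hypothesis on $\operatorname{ord}_x(\lambda[v])$ or $\operatorname{ord}_y(\lambda[v])$ is used here to rule out the only configuration in which $\lambda[\phi]=\det\eta_r[\phi]$ could collapse to zero under the substitution: if $\operatorname{ord}_x(\lambda[v])>0$ then $\lambda[\phi]$ inherits a genuine dependence on $u_{k+\operatorname{ord}_x(\lambda)}$ through $\phi_{u_k}$, and if $\operatorname{ord}_y(\lambda[v])=0$ then no $v_y$-derivatives are present to permit such a cancellation. The hypothesis that either $r>0$ or $\bar\eta_i=0$ for all $i$ is used to ensure that this pure $D_x^{2k+r}$-term sits at the top total $(\xi,\eta)$-degree of $\aleph$ and cannot be drowned out by mixed terms carrying $D_y$-factors from the $\bar\eta_i$. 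Since the principal symbol of $L$ is $\xi\eta\cdot I_n$, the principal symbol of any $\theta\circ L$ is divisible by $\xi\eta$; the surviving pure-$\xi^{2k+r}$ summand of $\sigma(\aleph)$ rules this out, so $\aleph\ne\theta\circ L$ for any $\theta$ and $\aleph$ is non-trivial.

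The hard part of the argument is this last top-degree bookkeeping: carefully tracking how the $D_y$-parts of $\phi_*$ and $\tilde{\mathcal N}[\phi]$ could in principle combine to produce terms of the same total $(\xi,\eta)$-degree as the pure-$D_x$ contribution just identified. The two displayed hypotheses are precisely calibrated to exclude the degenerate configurations in which such cancellation could occur, and verifying this case by case is where the routine-looking calculation will require genuine care.
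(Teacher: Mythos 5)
Your reduction of the first two claims to linearization plus Lemma~\ref{l2} (via Corollary~\ref{dp2n}) matches the paper, and your identification of the coefficient $(-1)^k\phi_{u_k}^\top\eta_r[\phi]\phi_{u_k}$ of $D_x^{2k+r}$ is also the paper's computation. The genuine gap is in your final non-triviality step. You argue with the \emph{total-degree} principal symbol: since $\sigma(L)=\xi\eta\,I_n$, any $\theta\circ L$ has top symbol divisible by $\xi\eta$, and the pure $\xi^{2k+r}$ summand of $\sigma(\aleph)$ excludes this. But nothing guarantees that $2k+r$ is the total order of $\aleph$: the operator $Q^\dagger$ contains a $D_y$-part of order $m\ge 1$, $\tilde{\mathcal{N}}[\phi]$ contains $\bar\eta_i D_y^i$ up to $i=\bar r$ (and $\bar r$ is not bounded by the hypotheses), and $\phi_*$ may contain $D_y$-terms as well; compositions such as $D_y^m\circ\bar\xi_m^\top\circ\bar\eta_{\bar r}[\phi]D_y^{\bar r}\circ\phi_{u_k}D_x^k$ have total degree $m+\bar r+k$, which can exceed $2k+r$. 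So the pure $D_x^{2k+r}$ term need not belong to the principal symbol at all, and your claim that the hypothesis ``$r>0$ or $\bar\eta_i=0$'' puts it at top total degree is false --- that hypothesis only makes it the \emph{unique} term of highest $D_x$-order. The working argument (the paper's) filters by the power of $D_x$ rather than by total degree: writing $L=(D_y-F_{u_x})\circ D_x-F_{u_y}D_y-F_u$, for any $\theta=\sum\zeta_{ij}D_x^iD_y^j$ the terms of $\theta\circ L$ carrying the highest power of $D_x$ form the operator $\bigl(\sum_j\zeta_{pj}D_y^j\circ(D_y-F_{u_x})\bigr)\circ D_x^{p+1}$, which necessarily contains a nonzero term with a positive power of $D_y$; whereas in $\aleph$ the terms with the highest power of $D_x$ (namely $2k+r$) carry no $D_y$ --- and it is exactly here, not at the level of total degree, that ``$r>0$ or $\bar\eta_i=0$'' and $k>0$ are used. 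Hence $\aleph\ne\theta\circ L$.

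A secondary, smaller gap: you assert that $\eta_r[\phi]$ is invertible, and your justification (``$\lambda[\phi]$ inherits a genuine dependence on $u_{k+\operatorname{ord}_x(\lambda)}$ through $\phi_{u_k}$'') is precisely the statement that needs proof; a chain-rule computation only gives $\partial\lambda[\phi]/\partial u_{k+\delta}=\lambda_{v_\delta}[\phi]\,\phi_{u_k}$ and so reduces the claim to the nonvanishing of another substituted function, which is circular. The paper's argument is by contradiction: if $\lambda[\phi]=0$, then (since $\lambda$ depends essentially on $v_\delta$, $\delta=\operatorname{ord}_x(\lambda)$) some $D_x^\delta(\phi^i)$ would be expressible through the remaining $D_x^\delta(\phi^j)$, through $x$, $y$, $D_x^{j}(\phi)$ with $j<\delta$, and (only when $\delta>0$) through $D_y^b(\phi)$; differentiating with respect to $u_{\delta+k}$, on which only the $D_x^\delta(\phi^j)$ depend, contradicts $\operatorname{rank}(\phi_{u_k})=\ell$. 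The disjunction ``$\operatorname{ord}_x(\lambda)>0$ or $\operatorname{ord}_y(\lambda)=0$'' is what legitimizes (or removes) the $D_y^b(\phi)$ arguments, since for $\delta=0$ these do depend on $u_k$ and would break the argument; your one-line version does not capture this mechanism.
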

\begin{proof} Due to Corollary~\ref{dp2n} and the comments near eq.~\eqref{esvn}, $\aleph$ is an inverse Noether operator for \eqref{hyp}. Therefore, we need only to prove the non-triviality of $\aleph$. Linearizing the def\/ining relation
\begin{gather*} D_x D_y(\phi) = \tilde{F} \left(x,y,\phi,D_x(\phi),D_y(\phi)\right) \end{gather*}
for the dif\/ferential substitution $v=\phi[u]$ and taking \eqref{dlin} into account, we see that $Q$ in \eqref{dplin2} has the form
\begin{gather*} Q = \sum^k_{i=0} \xi_i[u] D_x^i + \sum^m_{i=1} \bar{\xi}_i[u] D_y^i, \qquad m \ge 1.\end{gather*}
Comparing the coef\/f\/icients of $D_x^{k+1}D_y$ in the left- and right-hand sides of \eqref{dplin2}, we obtain $\xi_k=\phi_{u_k}$. This implies $\aleph = (-1)^k \phi_{u_k}^\top \eta_r[\phi] \phi_{u_k} D_x^{2k+r} + \cdots$, where the dots denote terms with lower powers of $D_x$. The conditions of the corollary guarantee that $\det (\eta_r[\phi]) \ne 0$ follows from $\det (\eta_r[v]) \ne 0$ and the coef\/f\/icient of $D_x^{2k+r}$ in $\aleph$ is not equal to zero.

Indeed, let $\operatorname{ord}_x (\lambda[v])$ be equal to $\delta$ and $\phi^i$ denote the $i$-th component of the vector $\phi$. Then the functions $D_x^{\delta}(\phi^i)$, $i=\overline{1,\ell}$, depend on $u_{\delta+k}$ and are functionally independent due to the condition $\operatorname{rank} (\phi_{u_k}) = \ell$ of the corollary and the equalities $(D_x^{\delta}(\phi^i))_{u_{\delta+k}} = \phi_{u_k}^i$. In addition, $D_x^{\delta}(\phi^i)$ cannot be expressed in terms of the functions $x$, $y$, $D_x^{j}(\phi)$, $0 \le j<\delta$, and, if $\delta >0$, $D_y^{b}(\phi)$, $b \in \mathbb{Z}_{+}$, because these functions do not depend on $u_{\delta+k}$. But, if $\lambda[\phi] = 0$, then a function $D_x^{\delta}(\phi^i)$ must be expressed in terms of $D_x^{\delta}(\phi^j)$, $j \ne i$, and the functions mentioned in the previous sentence. Therefore, $\det (\eta_r[\phi] ) = \lambda[\phi] \ne 0$ and $\phi_{u_k}^\top \eta_r[\phi] \phi_{u_k} \ne 0$.

Let $\theta = \sum\limits_{i=0}^p \sum\limits_{j=0}^s \zeta _{ij}[u] D_x^i D_y^j$. Then
\begin{gather*} \theta \circ L = \left( \sum_{j=0}^s \zeta _{pj}[u] D_y^j \circ \left(D_y - F_{u_x} \right) \right) \circ D_x^{p+1} + \cdots, \end{gather*}
where the dots denote terms with lower powers of $D_x$. Thus, the terms with the highest power of $D_x$ in $\theta \circ L$ must contain a non-zero power of $D_y$. But the highest power of $D_x$ in $\aleph$ does not contain $D_y$. Hence, $\aleph$ cannot be represented as $\theta \circ L$ and is non-trivial.
\end{proof}

As an illustrative example, let us consider the scalar Goursat equation $u_{xy}=\sqrt{u_x u_y}$. This equation is neither Euler--Lagrange nor Darboux integrable equation, but the dif\/ferential substitution $v=\sqrt{u_x}$ map it into the equation $v_{xy}= \frac{1}{4} v$, which has Lagrangian $v_x v_y + \frac{1}{4} v^2$ and the inverse Noether operator $\tilde{\mathcal{N}}=1$ (i.e., $\tilde{\mathcal{N}}$ is the identity mapping). Therefore, the Goursat equation admits an inverse Noether operator in accordance with Corollary~\ref{coll2}. Let us construct the corresponding operator $Q$. The direct calculations yield
\begin{gather*} \big( D_x D_y \big(\sqrt{u_x}\big)\big)_* = D_x \circ \big( D_y\big(\sqrt{u_x}\big) \big)_* - \frac{1}{4 \sqrt{u_y}} \circ L, \\
 \big( D_y\big(\sqrt{u_x}\big) \big)_* = D_y \circ \big(\sqrt{u_x}\big)_* - \frac{1}{2 \sqrt{u_x}} \circ L, \\
 \big( D_x D_y \big(\sqrt{u_x}\big)\big)_* = D_x \circ D_y \circ \big(\sqrt{u_x}\big)_* - Q \circ L, \qquad Q = \frac{1}{4 \sqrt{u_y}} + D_x \circ \frac{1}{2 \sqrt{u_x}},\\
\left( D_x D_y - \frac{1}{4} \right) \circ \big(\sqrt{u_x}\big)_* = \left( D_x D_y\big(\sqrt{u_x}\big) - \frac{\sqrt{u_x}}{4}\right)_* + Q \circ L = Q \circ L, \end{gather*}
and, according to Corollary~\ref{coll2},
\begin{gather*}\aleph= -8 Q^\dagger \circ \big(\sqrt{u_x}\big)_* = \left( \frac{2}{\sqrt{u_x}} D_x - \frac{1}{\sqrt{u_y}} \right) \circ \frac{1}{\sqrt{u_x}} D_x = \frac{2}{u_x} D_x^2 - \left( \frac{u_{xx}}{u_x^2} + \frac{1}{\sqrt{u_x u_y}} \right) D_x \end{gather*}
is an non-trivial inverse Noether operator for the Goursat equation. Since the substitution $v=\sqrt{u_y}$ also maps $u_{xy}=\sqrt{u_x u_y}$ into $v_{xy}= \frac{1}{4} v$, the `symmetric' (under the interchange $x \leftrightarrow y$) version of Corollary~\ref{coll2} gives us the second
inverse Noether operator
\begin{gather*} \frac{2}{u_y} D_y^2 - \left( \frac{u_{yy}}{u_y^2} + \frac{1}{\sqrt{u_x u_y}} \right) D_y \end{gather*}
for the Goursat equation.

In conclusion, we note that the situations considered in this section are fairly typical for hyperbolic integrable systems. For example, all $S$-integrable scalar equation \eqref{hyp} mentioned in \cite{MeshSok,ZhibSok} are either Euler--Lagrange equations or mapped into Euler--Lagrange equations via dif\/ferential substitutions. Therefore, all these $S$-integrable equations admit at least inverse non-trivial Noether operators by Corollary~\ref{coll2}.
In addition, most of the other scalar equations admitting higher symmetries and listed in these works are Darboux integrable, and the remaining equations from these works are related to the Euler--Lagrange equation $v_{xy}= c v$ via B\"acklund transformations, some of which have the form of the dif\/ferential substitutions $v=\phi[u]$. Thus, the vast majority of these equations admit non-trivial Noether's operators. The full lists of scalar equations~\eqref{hyp} related via dif\/ferential substitutions of f\/irst order to Euler--Lagrange equations $v_{xy}=G(v)$ can be found in \cite{ZhPP}. The introduction of \cite{Slin} mentions a constructive (but weak) necessary condition for a scalar equation~\eqref{hyp} to be mapped into an equation of the form $v_{xy}=G(x,y,v)$ by a dif\/ferential substitution of higher order.

\subsection*{Acknowledgements} The author thanks the referees for useful suggestions. This work is supported by the Russian Science Foundation (grant number 15-11-20007).

\pdfbookmark[1]{References}{ref}
\LastPageEnding


\begin{thebibliography}{99}
\footnotesize\itemsep=0pt

\bibitem{AB}
Anco S.C., Bluman G., Direct construction of conservation laws from f\/ield
 equations, \href{https://doi.org/10.1103/PhysRevLett.78.2869}{\textit{Phys. Rev. Lett.}} \textbf{78} (1997), 2869--2873.

\bibitem{AB2}
Anco S.C., Bluman G., Direct construction method for conservation laws of
 partial dif\/ferential equations. {II}.~{G}eneral treatment,
 \href{https://doi.org/10.1017/S0956792501004661}{\textit{European~J. Appl. Math.}} \textbf{13} (2002), 567--585,
 \href{http://arxiv.org/abs/math-ph/0108024}{math-ph/0108024}.

\bibitem{AK}
Anderson I.M., Kamran N., The variational bicomplex for hyperbolic second-order
 scalar partial dif\/ferential equations in the plane, \href{https://doi.org/10.1215/S0012-7094-97-08711-1}{\textit{Duke Math.~J.}}
 \textbf{87} (1997), 265--319.

\bibitem{Demsk}
Demskoi D.K., On a class of {L}iouville-type systems, \href{https://doi.org/10.1023/B:TAMP.0000046560.84634.8c}{\textit{Theoret. and
 Math. Phys.}} \textbf{141} (2004), 1509--1527.

\bibitem{DemSok}
Demskoi D.K., Sokolov V.V., On recursion operators for elliptic models,
 \href{https://doi.org/10.1088/0951-7715/21/6/006}{\textit{Nonlinearity}} \textbf{21} (2008), 1253--1264,
 \href{http://arxiv.org/abs/nlin.SI/0607071}{nlin.SI/0607071}.

\bibitem{fi}
Demskoi D.K., Startsev S.Ya., On the construction of symmetries from integrals
 of hyperbolic systems of equations, \href{https://doi.org/10.1007/s10958-006-0230-7}{\textit{J.~Math. Sci.}} \textbf{136}
 (2006), 4378--4384.

\bibitem{ff}
Fuchssteiner B., Fokas A.S., Symplectic structures, their {B}\"acklund
 transformations and hereditary symmetries, \href{https://doi.org/10.1016/0167-2789(81)90004-X}{\textit{Phys.~D}} \textbf{4}
 (1981), 47--66.

\bibitem{Gurs}
Goursat E., Recherches sur quelques \'equations aux d\'eriv\'ees partielles du
 second ordre, \textit{Ann. Fac. Sci. Toulouse Sci. Math. Sci. Phys.~(2)}
 \textbf{1} (1899), 31--78.

\bibitem{Levi}
Gubbiotti G., Levi D., Scimiterna C., On partial dif\/ferential and dif\/ference
 equations with symmetries depending on arbitrary functions, \href{https://doi.org/10.14311/AP.2016.56.0193}{\textit{Acta
 Polytechnica}} \textbf{56} (2016), 193--201, \href{http://arxiv.org/abs/1512.01967}{arXiv:1512.01967}.

\bibitem{ZhG}
Guryeva A.M., Zhiber A.V., On the characteristic equations of a system of
 quasilinear hyperbolic equations, \textit{Vestnik UGATU} \textbf{6} (2005),
 no.~2, 26--34.

\bibitem{AYu}
Jur\'a\v{s} M., Anderson I.M., Generalized {L}aplace invariants and the method
 of {D}arboux, \href{https://doi.org/10.1215/S0012-7094-97-08916-X}{\textit{Duke Math.~J.}} \textbf{89} (1997), 351--375.

\bibitem{Kis}
Kiselev A.V., Algebraic properties of {G}ardner deformations of integrable
 systems, \href{https://doi.org/10.1007/s11232-007-0081-5}{\textit{Theoret. and Math. Phys.}} \textbf{152} (2007), 963--976,
 \href{http://arxiv.org/abs/nlin.SI/0610072}{nlin.SI/0610072}.

\bibitem{KisL}
Kiselev A.V., van~de Leur J.W., Symmetry algebras of {L}agrangian
 {L}iouville-type systems, \href{https://doi.org/10.1007/s11232-010-0011-9}{\textit{Theoret. and Math. Phys.}} \textbf{162}
 (2010), 149--162, \href{http://arxiv.org/abs/0902.3624}{arXiv:0902.3624}.

\bibitem{KupW}
Kupershmidt B.A., Wilson G., Modifying {L}ax equations and the second
 {H}amiltonian structure, \href{https://doi.org/10.1007/BF01394252}{\textit{Invent. Math.}} \textbf{62} (1981), 403--436.

\bibitem{ZhPP}
Kuznetsova M.N., Pekcan A., Zhiber A.V., The {K}lein--{G}ordon equation and
 dif\/ferential substitutions of the form {$v=\phi(u,u_x,u_y)$}, \href{https://doi.org/10.3842/SIGMA.2012.090}{\textit{SIGMA}}
 \textbf{8} (2012), 090, 37~pages, \href{http://arxiv.org/abs/1111.7255}{arXiv:1111.7255}.

\bibitem{LSh}
Leznov A.N., Shabat A.B., Truncation conditions for perturbation theory series,
 in Integrable Systems, Bashkir Branch, Acad. Sci. USSR, Ufa, 1982, 34--45.

\bibitem{LSSh}
Leznov A.N., Smirnov V.G., Shabat A.B., The group of internal symmetries and
 the conditions of integrability of two-dimensional dynamical systems,
 \href{https://doi.org/10.1007/BF01029257}{\textit{Theoret. and Math. Phys.}} \textbf{51} (1982), 322--330.

\bibitem{MeshSok}
Meshkov A.G., Sokolov V.V., Hyperbolic equations with third-order symmetries,
 \href{https://doi.org/10.1007/s11232-011-0004-3}{\textit{Theoret. and Math. Phys.}} \textbf{166} (2011), 43--57,
 \href{http://arxiv.org/abs/0912.5092}{arXiv:0912.5092}.

\bibitem{MSY}
Mikhailov A.V., Shabat A.B., Yamilov R.I., The symmetry approach to the
 classif\/ication of non-linear equations. {C}omplete lists of integrable
 systems, \href{http://doi.org/10.1070/RM1987v042n04ABEH001441}{\textit{Russ. Math. Surv.}} \textbf{42} (1987), no.~4, 1--63.

\bibitem{Moh}
Mokhov O.I., Symplectic forms on the space of loops and {R}iemannian geometry,
 \href{https://doi.org/10.1007/BF01077975}{\textit{Funct. Anal. Appl.}} \textbf{24} (1990), 247--249.

\bibitem{Olv}
Olver P.J., Applications of {L}ie groups to dif\/ferential equations,
 \textit{Graduate Texts in Mathematics}, Vol.~107, 2nd~ed., Springer-Verlag,
 New York, 1993.

\bibitem{Sak}
Sakovich S.Yu., On conservation laws and zero-curvature representations of the
 {L}iouville equation, \href{https://doi.org/10.1088/0305-4470/27/5/004}{\textit{J.~Phys.~A: Math. Gen.}} \textbf{27} (1994),
 L125--L129.

\bibitem{SCC}
Sarlet W., Cantrijn F., Crampin M., Pseudosymmetries, {N}oether's theorem and
 the adjoint equation, \href{https://doi.org/10.1088/0305-4470/20/6/020}{\textit{J.~Phys.~A: Math. Gen.}} \textbf{20} (1987),
 1365--1376.

\bibitem{Serg}
Sergyeyev A., Vitolo R., Symmetries and conservation laws for the
 {K}arczewska--{R}ozmej--{R}utkowski--{I}nfeld equation, \href{https://doi.org/10.1016/j.nonrwa.2016.03.010}{\textit{Nonlinear
 Anal. Real World Appl.}} \textbf{32} (2016), 1--9, \href{http://arxiv.org/abs/1511.03975}{arXiv:1511.03975}.

\bibitem{Sokumn}
Sokolov V.V., On the symmetries of evolution equations, \href{https://doi.org/10.1070/RM1988v043n05ABEH001927}{\textit{Russ. Math.
 Surv.}} \textbf{43} (1988), no.~5, 165--204.

\bibitem{StS}
Sokolov V.V., Startsev S.Ya., Symmetries of nonlinear hyperbolic systems of the
 {T}oda chain type, \href{https://doi.org/10.1007/s11232-008-0069-9}{\textit{Theoret. and Math. Phys.}} \textbf{155} (2008),
 802--811.

\bibitem{SZh}
Sokolov V.V., Zhiber A.V., On the {D}arboux integrable hyperbolic equations,
 \href{https://doi.org/10.1016/0375-9601(95)00774-2}{\textit{Phys. Lett.~A}} \textbf{208} (1995), 303--308.

\bibitem{StM}
Startsev S.Ya., Dif\/ferential substitutions of the {M}iura transformation type,
 \href{https://doi.org/10.1007/BF02557141}{\textit{Theoret. and Math. Phys.}} \textbf{116} (1998), 1001--1010.

\bibitem{Slin}
Startsev S.Ya., Laplace invariants of hyperbolic equations linearizable by a
 dif\/ferential substitution, \href{https://doi.org/10.1007/BF02557408}{\textit{Theoret. and Math. Phys.}} \textbf{120}
 (1999), 1009--1018.

\bibitem{intmat}
Startsev S.Ya., On the variational integrating matrix for hyperbolic systems,
 \href{https://doi.org/10.1007/s10958-008-9034-2}{\textit{J.~Math. Sci.}} \textbf{151} (2008), 3245--3253.

\bibitem{Starxiv}
Startsev S.Ya., On relationships between symmetries depending on arbitrary
 functions and integrals of discrete equations, \href{http://arxiv.org/abs/1611.02235}{arXiv:1611.02235}.

\bibitem{Tr}
Tricomi F.G., Lezioni sulle equazioni a derivate parziali, Editrice Gheroni,
 Torino, 1954.

\bibitem{Tsar}
Tsarev S.P., Factoring linear partial dif\/ferential operators and the {D}arboux
 method for integrating nonlinear partial dif\/ferential equations,
 \href{https://doi.org/10.1007/BF02551175}{\textit{Theoret. and Math. Phys.}} \textbf{122} (2000), 121--133.

\bibitem{avz}
Zhiber A.V., Quasilinear hyperbolic equations with an inf\/inite-dimensional
 symmetry algebras, \href{https://doi.org/10.1070/IM1995v045n01ABEH001634}{\textit{Izv. Math.}} \textbf{58} (1995), 33--54.

\bibitem{ZMHS}
Zhiber A.V., Murtazina R.D., Habibullin I.T., Shabat A.B., Characteristic {L}ie
 rings and nonlinear integrable equations, RCD, Moscow -- Izhevsk, 2012.

\bibitem{ZhSh}
Zhiber A.V., Shabat A.B., Klein--{G}ordon equations with a nontrivial group,
 \textit{Soviet Phys. Dokl.} \textbf{24} (1979), 607--609.

\bibitem{ZhibSok}
Zhiber A.V., Sokolov V.V., Exactly integrable hyperbolic equations of
 {L}iouville type, \href{https://doi.org/10.1070/rm2001v056n01ABEH000357}{\textit{Russ. Math. Surv.}} \textbf{56} (2001), 63--106.

\bibitem{ZhSS}
Zhiber A.V., Sokolov V.V., Startsev S.Ya., On nonlinear {D}arboux-integrable
 hyperbolic equations, \textit{Dokl. Ross. Akad. Nauk} \textbf{343} (1995),
 746--748.

\end{thebibliography}
\end{document}